\numberwithin{equation}{section}
\theoremstyle{plain}
\newtheorem{theorem}{Theorem}[section]
\newtheorem{proposition}[theorem]{Proposition}
\newtheorem{definition}[theorem]{Definition}
\newtheorem{remark}[theorem]{Remark}
\newtheorem{assumption}[theorem]{Assumption}
\newcommand{\1}{\mathbbm{1}}
\newcommand{\bE}{\mathbb{E}}
\newcommand{\cK}{\mathcal{K}}
\newcommand{\cN}{\mathcal{N}}
\newcommand{\cT}{\mathcal{T}}
\newcommand{\dd}{\mathrm{d}}
\begin{document}
\selectlanguage{english}

\title{Capturing Model Risk and Rating Momentum in the Estimation of Probabilities of Default and Credit Rating Migrations}

\author{G. dos Reis${\dag}$${\ddag}$$^\ast$\thanks{Corresponding author.
		Email: G.dosReis@ed.ac.uk}, M. Pfeuffer${\P}$ and G. Smith$\dag$ $\S$\\
	\affil{$\dag$School of Mathematics, Maxwell Institute for Mathematical Sciences, The University of Edinburgh, James Clerk Maxwell Building, Peter Guthrie Tait Road, Edinburgh, EH9 3FD, UK \\
		$\ddag$Centro de Matem\'atica e Aplica\c c$\tilde{\text{o}}$es (CMA), FCT, UNL, Portugal 
		\\
		$\P$University of Erlangen-N\"uremberg, Department of Statistics and Econometrics, Lange Gasse 20, 90403 Nuremberg
		\\
		$\S$Moody's Analytics, 7 Exchange Crescent Conference Square, Edinburgh EH3 8RD		
} }

\maketitle

\begin{abstract}
\noindent
We present two methodologies on the estimation of rating transition probabilities within Markov and non-Markov frameworks. We first estimate a continuous-time Markov chain using discrete (missing) data and derive a simpler expression for the Fisher information matrix, reducing the computational time needed for the Wald confidence interval by a factor of a half. We provide an efficient procedure for transferring such uncertainties from the generator matrix of the Markov chain to the corresponding rating migration probabilities and, crucially, default probabilities.

For our second contribution, we assume access to the full (continuous) data set and propose a tractable and parsimonious self-exciting marked point processes model able to capture the non-Markovian effect of rating momentum. Compared to the Markov model, the non-Markov model yields higher probabilities of default in the investment grades, but also lower default probabilities in some speculative grades. Both findings agree with empirical observations and have clear practical implications.

We use \emph{Moody's proprietary corporate credit rating data set}. Parts of our implementation are available in the R package \emph{ctmcd}.
\end{abstract}
\bigskip
{\bf Keywords:} Confidence Intervals, Markov Chain, Generator Matrix, Point-Process, Rating Momentum
\\ 

\noindent{\bf 2010 AMS subject classifications:}  
Primary: 
60G55 
Secondary: 
62F15 
91G40 
\\



\noindent{\bf JEL subject classifications:} 
G11,
C18 and
G33

%
%

%
%
%
%
%
%
%

\section{Introduction}

Credit risk modelling and financial regulations have received added attention from Mathematics and Economics disciplines since the 2008 financial crash. On January 1, 2018, and for purposes of risk assessment, the new guideline IFRS 9 took effect requiring the calculation of expected losses for the complete maturity of certain obligors' riskier contracts. Thereby, a cornerstone of credit risk modelling lies in the ability to accurately estimate probabilities of default over varying time horizons. This can be either done by considering market data (e.g.~bond or credit default swap prices, as well as implied probabilities of default from equities, see \cite{BieleckiCrepeyHerbertsson2011}) or historical (default or rating) data. In this manuscript, we focus on the latter.

\begin{remark}[Obtaining Default Probabilities: Risk-neutral Vs real-world]
	It is important to note the distinction between these estimation methods. Using market data such as bond prices or credit default swaps to estimate default probabilities actually gives risk neutral default probabilities. Our approach uses observed data and therefore gives real-world (physical) default probabilities. Our results can be used without any adjustment in capital requirement calculations where real-world default probabilities are needed.	
\end{remark}

When estimating probabilities of default, it is typical that credit ratings are considered in the calculation, as they allow for more granularity. Ratings, as categorical solvency measures, might be issued by (external) rating agencies or be produced by the financial institutes themselves as part of the Pillar I internal ratings-based approach underpinning the Basel regulatory framework. Due to idiosyncratic company-level or general business-cycle changes, credit ratings vary over time, and this effect is referred to as a \emph{rating transition} or \emph{rating migration}. This dynamical movement is a stochastic process with a discrete state space in continuous-time. Here Markov chains are a simple, robust and tractable class to model the movement of such rating transitions. The specific models that can be used depend on the type of data available. 

Most literature dealing with the modelling of credit rating transitions focuses on anonymous discrete-time data and often on an annual basis. This data is easier to use and less costly to obtain than the ``full'' (continuous-time company specific rating transitions) data set. In the discretely observed data case, it is not possible to follow individual obligors over the different periods which forces one to treat all companies in the same rating as equivalent. Hence, one is naturally led to a Markov chain construct (continuous or discrete). Assuming a continuous-time Markov chain (CTMC) model that has been observed only at specific discrete points, obtaining the maximum likelihood estimator (MLE) and understanding the error of this estimate is a classical problem. Many works have investigated the estimation of Generator matrices or the (intermediate) Transition Probability Matrices (TPM), see \cite{KalbfleischLawless1985}, \cite{BladtSorensen2005}, \cite{BladtSorensen2009}, \cite{ReisSmith2017} to mention a few and \cite{Pfeuffer2017-R-Package} for an overview and algorithm implementation in the statistical language R.

Despite these works, the problem of how to conduct statistical inference in this context or, in particular, how to derive error estimates for discretely observed Markov processes, is still an issue. Since our inference is likelihood based, Wald confidence intervals (or Wald intervals) are the natural choice for error estimation. In \cite{KalbfleischLawless1985}, the authors use numerical techniques to estimate the derivatives and use a so-called \emph{quasi-Newton} method to obtain the MLE\footnote{A quasi-Newton method (or scoring procedure) only requires one to estimate the first order derivative. More common approaches such as Newton-Raphson for finding the MLE would also require evaluation of the second derivative.}. More recently, \cite{BladtSorensen2005} and \cite{BladtSorensen2009} consider the Expectation Maximisation (EM) algorithm and a Markov chain Monte Carlo (MCMC) algorithm to obtain the MLE. In the case of the EM, the authors provide a numerical scheme based on a formula from \cite{Oakes1999} to obtain the error in the estimate. Following their approach \cite{ReisSmith2017} give exact expressions for errors arising from the EM algorithm. Building on these, we transfer the errors in the estimation of the CTMC's generator matrix to the estimation errors of the rating transitions probabilities themselves. As far as we are aware, such estimations have not been considered, although, they are of significant practical importance. When complete continuous-time rating transition data is available, then the computation of point estimates and Wald intervals for the parameters of a CTMC is straightforward, see e.g.~\cite{LandoSkodeberg2002}.

Concerning the second contribution of our manuscript, \cite{LandoSkodeberg2002} show that rating transitions exhibit non-Markovian behaviours. In particular, an obligor that has been recently downgraded into a certain rating is more likely to be downgraded further than other obligors currently in that rating. Such an effect is referred to as \emph{(downward)-rating momentum}. A similar effect may also appear in upgrades; however, it is not as apparent. Documented (non-Markovian) effects in rating transitions include \emph{rating drift} (or \emph{momentum}) in \cite{AltmanKao1992} and \cite{LandoSkodeberg2002}, \emph{rating stickiness} in \cite{McNeilFreyEmbrechts2005} and specific rating agencies' policies (see \cite{CareyHrycay2001} and \cite{Loffler2005}). \cite{NickellPerraudinVarotto2000} highlight non-Markovian patterns in transition probabilities for ratings and discuss their dependence regarding underlying variables like industry, domicile and business cycle. However, of these effects rating momentum is the most important to capture and what we look to model here.

The rating momentum effect has a non-negligible bearing on the risk attributed to a portfolio as it makes defaults of investment grade bonds likelier than defaults that are estimated within the standard Markov framework.  
\cite[p.8]{couderc2008credit} report on the temporal span of the rating drift (for a certain  Standard \& Poor’s database) and its mean reversion. When looking to model over longer horizons, the non-Markov effects such as momentum become more pronounced, i.e.~have a larger impact on transition probabilities. At a practical level, the IFRS9 regulation requires knowledge of risks on rating migrations over longer horizons where these effects can significantly change the results. When one can access the full data set (continuous-time observations), it is possible to construct tractable models that capture non-Markov effects and this is one of our contributions. The model we propose is able to capture the momentum behaviour, and we found that the purely Markov model underestimates default risk in investment grades but overestimates the risk in some speculative grades. We discuss this in more detail in Section \ref{Sec:Rating Momentum Model} below. 

For clarity, we summarize the contributions of our manuscript in the next two points.
\begin{enumerate}
	\item In the CTMC setting with discretely observed data, we provide a new simpler closed-form expression for the Hessian of the likelihood function, enabling faster computation of confidence intervals via the Fisher information matrix (Wald intervals). We further provide expressions allowing one to transfer confidence intervals at the level of the generator matrix to the level of rating transitions and probabilities of default, where they can be easily interpreted. Recall that in the case of discrete anonymous data one is forced to adopt the Markov assumption.
	
	\item In the setting of continuously observed data, we propose a tractable and parsimonious model that captures the non-Markovian phenomenon of \emph{rating momentum}. We provide a calibration procedure and several comparative tests based on \emph{Moody's Corporate Credit Ratings} data set (see Section \ref{Sec:Data and Testing}). Most notable is the difference between empirical, Markov (CTMC) and non-Markov (our model) estimates of probabilities of default. We observe that in several cases the Markov model under- or overestimates the probabilities of default empirically observed while the non-Markov model provides better agreement.
\end{enumerate}

\begin{remark}[Software and R-code]
	\label{rem:pfeuffer-ctmcd}
	The algorithms relating to Markov Chains (in Section \ref{Sec:EM Error}) are part of the CRAN R-package \emph{ctmcd: Estimating the Parameters of a Continuous-Time Markov Chain from Discrete-Time Data} (see \cite{Pfeuffer2017-R-Package}) --- {https://CRAN.R-project.org/package=ctmcd}
\end{remark}

\subsubsection*{Potential Non-Markov Models with Application to Rating Momentum.}
Different models have been introduced in the past to incorporate non-Markov phenomena. We briefly overview some of these works here. Interested readers are encouraged to consult the reference herein.

\emph{Extended State Space and Mixture Models.}
\cite{ChristensenEtAl2004}, attempt to take non-Markovian effects into account while preserving some Markovian structure. The idea is to extend the state space to include $+$ and $-$ states, referred to as excited states. For example, when a company downgrades from rating $A$ to rating $B$, it is instead given the rating $B^{-}$, which has a higher probability of further downgrades than $B$. Similarly, if the company transitions from $B$ to $A$, it is instead rated $A^{+}$ which has a smaller probability of downgrade than $A$. 
This construction allows us to maintain the Markov property; however, we must calibrate many more parameters and, in real-world data,  we do not observe a company belonging to the excited or non-excited state. Moreover, when successive transitions occur, it is unknown whether the company was in the excited or non-excited state. Hence calibrating an intensity between excited and non-excited states seems impossible. One could navigate around this by assuming excited states do not jump to non excited states, but this is against empirical evidence of momentum reducing over time, see \cite[p. 35]{couderc2008credit} for example.

\cite{DAmico2016} apply a semi-Markov model to capture the observed effect that companies move through states not following an exponential distribution. However, they still rely on the Markov transition structure and hence they need to expand the state space in order to include momentum. Related to this approach is \cite{FrydmanSchuermann2008}, where the authors cleverly use two different time homogeneous CTMC generator matrices, however, it does not capture momentum since the jump itself is Markov.

\emph{Hidden Markov Model (HMM).}
A different idea is to use a hidden Markov model (HMM) (see \cite{CappeEtAl2005} for a complete account). The HMM approach to credit risk can be traced back to the work of R.~Elliot, see overview in \cite{Korolkiewicz2012} and its references. Roughly, the approach considers two processes $(X,Y)$, the observed (published) credit rating \emph{Y} and the ``true" credit rating \emph{X} which is unobserved (or hidden). The paradigm is that the observed credit ratings are assumed to be ``noisy'' observations of $Y$ and not the true representation of the credit risk. The goal is then to use $Y$ to make inference on $X$. In such a setup if one considers the noisy observation and the true rating as correlated, then rating momentum can be added into the model. Although this approach has some benefits, the work appears to be constrained to the discrete time case and, from the literature point of view, the approach remains unexplored.

\emph{Hazard Rates, Point Processes and self-exciting Marked Point Processes.} Let us start by discussing Hazard rates. The main work in this area for credit ratings is given in \cite{KoopmanEtAl2008}. An extensive work bringing hazard rate methodologies to the estimation of probabilities of default can be found in \cite{couderc2008credit} (and references therein). The paradigm is that each company has a corresponding hazard rate (a parameter) and in this hazard rate one can encode various factors such as momentum for example. The issue with \cite{KoopmanEtAl2008}'s methodology is that they must calibrate parameters for each of the various transitions with the extra variables to obtain the probabilities of these transitions. This however, increases the model's complexity greatly. Our goal is to present a model as parsimonious as possible that captures rating momentum.

The approach we pursue relies on \emph{point processes} that are dependent on their own history, so-called \emph{self-exciting} processes (see \cite{DaleyVereJones2003}, \cite{DaleyVereJones2007}). Point processes are generalizations of Markov processes and a natural choice for our model. One of the most satisfying aspects of using point processes is that one can capture rating momentum by adding only a small number of parameters ($2$ to $4$ in our case). The most common example of a self-exciting process is the Hawkes process. These processes appear in other areas of mathematical finance, such as models for limit order books and are also used in high-frequency trading, see \cite{BacryMastromatteoMuzy2015}. However, they have not been fully utilized in credit transitions. 
A Hawkes process can be thought of as a counting process (similar to a Poisson process) which in one dimension has an intensity $\lambda_{t}$ of the form (see \cite{DassiosZhao2013}),
\begin{align*}
\lambda_{t} =
\mu
+
\int_{0}^{t} \phi(t-s) \dd N_{s} \, ,
\end{align*}
where $N$ is a counting measure and denotes that an event has occurred (this will be a rating change in our case), $\mu$ is the baseline intensity and $\phi$ is the impact on the intensity and allows the intensity to depend on past events. 
By setting $\phi =0$ the Hawkes process reduces to a Poisson process. A common choice for $\phi$ is the so-called exponential decay, namely $\phi(t-s)=\alpha \beta \exp(-\beta(t-s) )$ with $\alpha,\beta>0$. Functions of this form are useful since the event's influence on the intensity weakens as time progresses, hence we can account for momentum reducing over time (agreeing with the findings of \cite{couderc2008credit}).

Using a Hawkes process allows us to embed past dependence in the jump times, however, in this simplistic form it is not fit for our purposes since we require different changes to intensity dependent on whether it is an upgrade or a downgrade. Further, we require the baseline intensity $\mu$, to depend on the current state. Such extended Hawkes processes are referred to as \emph{marked point processes}, since to each event observed one assigns a \emph{mark} to indicate the type of event, see \cite[Chapter 6.4]{DaleyVereJones2003}. We discuss this further in Section \ref{Sec:Rating Momentum Model}.
\medskip

This work is organized as follows. In Section \ref{Sec:Data and Testing} we overview the data paradigms and describe the data we work with. In Section \ref{Sec:EM Error} we establish our closed-form expression for the Wald confidence intervals for the underlying Transition Probability Matrix (TPM) in the Markov setting. Finally, in Section \ref{Sec:Rating Momentum Model} we analyse Moody's corporate credit rating data set, we test for non-Markovianity and calibrate the proposed non-Markov model. We also give due attention and discuss the effect of adding momentum in the estimation of default probabilities. In order to help keep this work self contained we supplement the core ideas with further discussion in the Appendix.

%
%
%
%
%

\section{Data description}
\label{Sec:Data and Testing}

To illustrate the statistical methods we develop in this manuscript, we use the proprietary \emph{Moody's corporate credit rating data set}, which comprises continuous-time observations for 17097 entities (companies) in the time interval Jan 1, 1987 to Dec 31, 2017. Through the remainder of the article we refer to this set as the ``\emph{Moody's data set}''. Some of the discrete data is available publicly, but the full data set is proprietary and must be purchased. Other works such as \cite{ChristensenEtAl2004} also use the full Moody's data set.

The rating categories in Moody's data set are depicted in decreasing order of rating quality as ``Aaa'', ``Aa1'', ``Aa2'', ``Aa3'', ``A1'', ``A2'', ``A3'', ``Baa1'', ``Baa2'', ``Baa3'', ``Ba1'', ``Ba2'', ``Ba3'', ``B1'', ``B2'', ``B3'', ``Caa1'', ``Caa2'', ``Caa3'', ``Ca'', ``C''. We define ``C'' as the default category. The refinements ``1'', ``2'' and ``3'' shall be referred to as \emph{modifiers} in the following. The ratings ``Aaa'' to ``Baa3'' are the so-called ``Investment Grade'' block while the ratings ``Ba1'' to ``Ca'' form the ``Speculative Grade'' block. 

We employ a standard data aggregation arrangement where we aggregate all modifiers within their rating class. For instance, we group ``Aa1'', ``Aa2'', ``Aa3'' as ``Aa'' and so on to obtain the following categories in decreasing credit quality: ``Aaa'', ``Aa'', ``A'', ``Baa'', ``Ba'', ``B'', ``Caa'', ``Ca'' and ``C'' (Default Category). 
We shall use the standard aggregation unless otherwise stated. 

For clarification, unlike the Standard and Poor rating classes where ``C'' is taken as the rating above default and ``D'' is used as default, in the Moody's rating system, ``C'' denotes default. We use the latter notation throughout our manuscript.

As described in the introduction there are two data paradigms, a \emph{discrete} (missing) and a \emph{continuous} (full) one. In Section \ref{Sec:EM Error} of the paper we construct annually discretized rating transition matrices from this data, and one is led to use a (CTMC) Markov model. In Section \ref{Sec:Rating Momentum Model}, we use the full data set and its richness allows us to expand the scope to non-Markov models.

%
%
%
%

\section{Calculating Wald Confidence Intervals for Discretely Observed Markov Processes}
\label{Sec:EM Error}

The working paradigm for this section is the \emph{discrete time data} one and we work towards estimating the generator matrix $\mathbf{Q}$ of the underlying CTMC model. For this setting, it was shown in \cite{ReisSmith2017} that the Expectation-Maximization (EM) algorithm is the strongest algorithm for the estimation of $\mathbf{Q}$ (a description of the EM algorithm for this context is provided in Appendix \ref{Sec:Fundamentals of Markov}). The EM is built to use likelihood-based inference which has the advantage that one can obtain errors for the estimate by taking derivatives, the so-called Wald confidence intervals. The goals of this section are to find expressions for these derivatives and then use them to obtain the corresponding intervals for the transition probabilities.

Our CTMC set up is similar to that of \cite{ReisSmith2017}.
We understand companies' ratings as defined on a finite state space $\{1, \dots, h\}$, where each state corresponds to a rating. We denote $Aaa$ as rating $1$ and $C$ (default) as rating $h$. Let $\mathbf{P}$ be an $h$-\emph{by}-$h$ stochastic matrix, which will be the corresponding TPM (at, say, time $t=1$) and $\mathbf{Q}$ is an $h$-\emph{by}-$h$ generator matrix; we denote $p_{ij}:=\left(\mathbf{P}\right)_{ij}$, $q_{ij}:=\left(\mathbf{Q}\right)_{ij}$ and the intensity of state $i$ by $q_{i}=\sum_{j \neq i}q_{ij}$ where $i,j \in \{1,\dots,h\}$. A standard assumption used in credit risk modelling is that the default state is an absorbing state, hence $p_{hh}=1$. In the data, companies are observed to withdraw (e.g.~via mergers or early payment) and we treat such a withdrawn rating as a censored result.

Regarding the CTMC's generator, we work with stable generator matrices, i.e.~matrices $\mathbf{Q}$ that satisfy the following definition.
\begin{definition}[Stable-Conservative infinitesimal Generator matrix of a CTMC]
	\label{def:GeneratorMatrixSpace}
	We say a matrix $\mathbf{Q}$ is a generator matrix if the following properties are satisfied for all $i,j \in \{1, \dots, h\}$:
	\\ 
${}$\qquad  i)  $0 \le q_{ij} < \infty$ for $i \neq j$;\qquad  ii) $q_{ii} \le 0$;\qquad  and iii) $\sum_{j=1}^{h} q_{ij}=0$.
\end{definition}
Our quantity of interest is the time varying transition probability matrix, $\mathbf{P}(t)$, which is related to the generator matrix $\mathbf{Q}$ via,
\begin{align}
\label{Eq:P and Q relation}
\mathbf{P}(t) = e^{\mathbf{Q}t},\qquad t\geq 0 .
\end{align}

We assume throughout that $\mathbf{Q}$ is a valid generator matrix (in the sense of Definition \ref{def:GeneratorMatrixSpace}), hence $\mathbf{P}$ is well defined. 
Considering the case where the CTMC is observed at times $t_{0}<t_{1} < \dots < t_{M}$ and denote $\Delta t_{u} := t_{u} - t_{u-1}$ for $u \in \{1, \dots, M \}$ and the transition matrix over that interval by $\mathbf{N} (u)$.

The likelihood of the discretely observed Markov process is given by,
\begin{align}
\label{eq:LikelihoodCTMC}
L(\mathbf{Q}| \mathbf{N})= \prod_{u=1}^{M} \prod_{s=1}^h \prod_{r=1}^h\exp(\mathbf{Q} \Delta t_{u})_{sr}^{\mathbf{N}_{sr}(u)}.
\end{align}
Although this is not the full likelihood of a CTMC, it is the likelihood based on the observable data, so in effect, the EM algorithm looks to find $\mathbf{Q}$ to maximise \eqref{eq:LikelihoodCTMC}. Therefore the Wald confidence intervals of $\mathbf{Q}$ are based on this likelihood.
One can construct confidence intervals for other algorithms such as the quasi-optimization of the generator (see \cite{KreininSidelnikova2001}) by bootstrapping, but these are computationally more expensive to calculate.

\subsection{Direct Differentiation for Gradient and Hessian of the Likelihood}
\label{Sec:Direct Differentiation}
The standard procedure to derive a confidence interval is to use the variance of the estimator (in our case, the negative inverse of the Hessian $H$ of the likelihood $L$ in \eqref{eq:LikelihoodCTMC}). Since the EM algorithm deals with a missing data likelihood, these derivatives are complex to calculate, however, \cite[Section 2]{Oakes1999} derived a simpler formula for the Hessian. This formula was used by \cite{BladtSorensen2009} and \cite{ReisSmith2017} to obtain error estimates in this setting. 
A formula for obtaining the Hessian is useful, however, while the second derivative can inform us about errors at the level of the generator matrix, it does not shed light on how these errors propagate to the transition probabilities (see \eqref{Eq:P and Q relation}). For that we need to be able to take further derivatives.

Relying on first principles, it turns out that for this problem one can extract derivatives without the said formula in \cite{Oakes1999} and derive a new closed-form solution involving matrix exponentials for the gradient and the Hessian by direct differentiation.

Similar to the situation in \cite{ReisSmith2017} the parameter space of $\mathbf{Q}$ is closed at zero and we can only differentiate in the interior of the space, hence we introduce the notion of \emph{allowed pairs}. This concept allows one to incorporate absorbing states in the analysis.
\begin{definition}[Allowed pairs]
	\label{def:AllowedPairs}
	Let $i, j \in \{1, \dots, h\}$, then we say that the pair $(i, j)$ is \emph{allowed} if $i \neq j$ (not in the  diagonal) and $q_{ij}$ is not converging to zero under the EM algorithm.
\end{definition}
Essentially $i$, $j$ is allowed if $q_{i j}>0$ and thus in the interior of the parameter space of $\mathbf{Q}$. For ease of presentation we denote by $\mathbf{V}_{\mathbf{Q}}$ the matrix of allowed pairs of $\mathbf{Q}$, namely: for $N_{a}$ the number of allowed pairs in the estimation of $\mathbf{Q}$ we define the matrix $\mathbf{V}_{\mathbf{Q}}$ as the  $N_{a}$-\emph{by}-$2$-dimensional matrix which records the allowed pairs of $\mathbf{Q}$.

Let $\mathbf{A}$ be an $h$-by-$h$ matrix, $\alpha$, $\beta$, $s$, $r$ $\in \{1,\dots, h\}$ and $\mathbf{e}_{r}$ be an $h$-dimensional column vector with $1$ at entry $r$ and zero elsewhere. Let us further denote $a_{\alpha \beta} := (\mathbf{A})_{\alpha \beta}$ as the entries of matrix $\mathbf{A}$ and assume $a_{\alpha \beta} >0$, then using standard properties of derivatives and integrals of matrix exponentials (see \cite{Wilcox1967} and \cite{VanLoan1978}) it follows that,
\begin{equation}
\label{Eq:Matrix Exponential Derivative Relation}
\frac{\partial \exp(\mathbf{A}t)_{sr}}{\partial a_{\alpha\beta}}=\mathbf{e}_{s}^{\intercal}\int_0^{t}\exp(\mathbf{A}v)\frac{\partial \mathbf{A}}{\partial a_{\alpha\beta}}\exp\big(\mathbf{A}(t-v)\big) \dd v \mathbf{e}_{r}
= \mathbf{e}_{s}^{\intercal}\exp\left(\left[\begin{smallmatrix}
\mathbf{A} & \frac{\partial \mathbf{A}}{\partial a_{ij}} \\
{0} & \mathbf{A}
\end{smallmatrix}\right] t \right)_{1:h, h+1:2h} \mathbf{e}_{r}.
\end{equation} 
Using \eqref{Eq:Matrix Exponential Derivative Relation}, we can directly calculate the first and second derivative of the likelihood function for a discretely observed Markov process. 
Let $(\alpha, \beta)$ and $(\mu, \nu)$ be allowed pairs for the generator $\mathbf{Q}$, then the expressions for the gradient and Hessian of the logarithm of \eqref{eq:LikelihoodCTMC} are as follows: for the \textbf{Gradient} we have
\begin{align*}
\frac{\partial \log L(\mathbf{Q}| \mathbf{N})}{\partial q_{\alpha\beta}}
=&
\sum_{u=1}^M\sum_{s=1}^h\sum_{r=1}^h
  \mathbf{N}_{sr}(u)\frac{\exp(\mathbf{C}_\eta^{(\alpha\beta)} \Delta t_{u})_{s,h+r}}{\exp(\mathbf{Q} \Delta t_{u})_{s,r}} \quad \text{ with } \quad \mathbf{C}_\eta^{(\alpha\beta)}=\begin{bmatrix}
                  {\mathbf{Q}} ~ & \mathbf{e}_\alpha \mathbf{e}_\beta^{\intercal}-\mathbf{e}_\alpha \mathbf{e}_\alpha^{\intercal} \\
                  {0} & \mathbf{Q}\end{bmatrix},
\end{align*}
while for the \textbf{Hessian} we have
\begin{align*}
& H(\mathbf{Q})_{\alpha\beta,\mu\nu}
=
\frac{\partial^2 \log L(\mathbf{Q}|\mathbf{N})}{\partial q_{\alpha\beta}\partial q_{\mu\nu}}
\\
&=
\sum_{u=1}^M\sum_{s=1}^h\sum_{r=1}^h \frac{\mathbf{N}_{sr}(u)}{\exp(\mathbf{Q} \Delta t_{u})_{sr}}
\left[\frac{\exp(\mathbf{C}_\eta^{(\alpha\beta)} \Delta t_{u})_{s,h+r}\exp(\mathbf{C}_\eta^{(\mu\nu)} \Delta t_{u})_{s,h+r}}{\exp(\mathbf{Q} \Delta t_{u})_{sr}}
-
\exp(\mathbf{C}_\xi^{(\alpha\beta,\mu\nu)} \Delta t_{u})_{s,3h+r}
\right]
,
\\
&\qquad \qquad \text{whereas } \mathbf{C}_\xi^{\alpha\beta,\mu\nu}=\begin{bmatrix}
                  \mathbf{C}_\eta^{(\alpha\beta)} & \frac{\partial \mathbf{C}_\eta^{(\alpha\beta)}}{\partial q_{\mu\nu}} \\
                  {0} & \mathbf{C}_\eta^{(\alpha\beta)}\end{bmatrix}.
\end{align*}
These estimates are direct applications of the theory above and hence we omit the steps.
Both the formula of \cite[p. 7]{ReisSmith2017} and this new one are exact expressions for the Hessian and thus for the Fisher information matrix. However, the new formula is of distinctly reduced complexity, which consequently leads to clearly shorter computing times. 
Since the Hessian is only defined for allowed pairs the matrix is dimension-wise smaller than $(h-1)^{2}$-\emph{by}-$(h-1)^{2}$. 

We compute the Wald confidence intervals as follows,
\begin{itemize}
	\item recall that $\mathbf{V}_{\mathbf{Q}}$ is the $N_{a}$-\emph{by}-$2$ dimensional matrix recording the allowed pairs of $\mathbf{Q}$ (with $N_{a}$ the number of allowed pairs in the estimated $\mathbf{Q}$).

The $ij${th} component of the Hessian is the differential
	\begin{equation*}
	\frac{\partial^{2}}{\partial q_{\mathbf{V}_{\mathbf{Q}}(i,1) \mathbf{V}_{\mathbf{Q}}(i,2)} \partial q_{\mathbf{V}_{\mathbf{Q}}(j,1) \mathbf{V}_{\mathbf{Q}}(j,2)}} \, .
	\end{equation*}
	
	\item The Fisher information matrix is given by $-\mathbf{H}(\cdot)$. The estimated variance of the allowed parameter $q_{ab}$ is the $i^{th}$ diagonal element of  $-\mathbf{H}(\cdot)^{-1}$, where $\mathbf{V}_{\mathbf{Q}}(i,1)=a$ and $\mathbf{V}_{\mathbf{Q}}(i,2)=b$.
	
	\item The Wald $95\%$ confidence interval of the MLE $\hat{q}_{ab}$ is $\hat{q}_{ab} \pm 1.96\sqrt{Var(\hat{q}_{ab})}$.
\end{itemize} 

A $95$\% confidence interval for the generator matrix estimate based on Moody's discretely observed corporate rating data is illustrated in Table \ref{Table:95 Wald for Full Data}. To obtain the Wald confidence interval, the computation time was $\approx 1$s with the new expression compared to $\approx 2$s for the formula of \cite{ReisSmith2017}.

\begin{table}[!h]
	\hspace{-0.5cm}
	\scriptsize
	\begin{tabular}{ c | c | c | c | c | c | c | c | c }
		 Aaa & Aa & A & Baa & Ba & B & Caa & Ca & C \\ \hline
		~ & [0.074,0.091] & 0 & 0 & [0,0.001] & 0 & 0 & 0 & 0 \\ \hline
		[0.009,0.012] & ~ & [0.088,0.098] & [0.001,0.004] & [0,0.001] & 0 & 0 & 0 & 0 \\ \hline
		[0,0.001] & [0.023,0.027] & ~ & [0.061,0.067] & [0.003,0.005] & [0.001,0.002] & [0, 0.001] & 0 &0 \\ \hline
		[0,0.001] & [0.001,0.002] & [0.039,0.044] & ~ & [0.042,0.047] & [0.005,0.007] & [0.001,0.003] & [0,0.001] & 0 \\ \hline
		0 & [0,0.001] & [0.002,0.004] & [0.064,0.072] & ~ & [0.092,0.102] & [0.007,0.011] & [0.001,0.002] & 0 \\ \hline
		0 & [0,0.001] & [0,0.001] & [0.001,0.003] & [0.049,0.055] & ~ & [0.091,0.099] & [0.008,0.011] & 0 \\ \hline
		0 & 0 & 0 & [0,0.001] & [0.001,0.005] & [0.107,0.122] & ~ & [0.052,0.064] & [0.028,0.036] \\ \hline
		0 & 0 & 0 & [-0.001,0.006] & [0.003,0.018] & [0.047,0.083] & [0.127,0.181] & ~ & [0.123,0.170] \\ \hline
		0 & 0 & 0 & 0 & 0 & 0 & 0 & 0 & 0 
		\\ \hline
	\end{tabular}
	\caption{Confidence Interval (at 95 \% confidence) for the entries of the Generator Matrix for Moody's Corporate Rating Discrete-Time Transition Matrix.}
	\label{Table:95 Wald for Full Data}
\end{table}

\subsection{The Delta method - Confidence Intervals for probabilities}

The object we are estimating is the generator matrix $\mathbf{Q}$, thus the confidence intervals are based on the entries of this matrix. Although obtaining these confidence intervals are useful, from a practitioners standpoint it is more useful to know how this uncertainty propagates to the underlying TPM and the estimated probabilities of default. This is a classical problem in statistics where one wishes to consider how the confidence interval changes under a transformation (in this case \eqref{Eq:P and Q relation}), the standard method to do this is known as the \emph{Delta method}, see \cite{LehmannCasella1998} for further information.

We construct confidence intervals for each individual element in $\mathbf{P}$ using the set of \emph{allowed pairs} (Definition \ref{def:AllowedPairs}). We consider the confidence interval for the transition probability $p_{ij}$ at time $t$ as,
\begin{align*}
p_{ij}(\mathbf{V}_{\mathbf{Q}} ; t) :=
\left( e^{\mathbf{Q}t} \right)_{ij} \, .
\end{align*}
That is for a fixed $t$, $p_{ij}(\mathbf{V}_{\mathbf{Q}} ; t)$ is a multivariate function of the allowed pairs, $\mathbf{V}_{\mathbf{Q}}$, in $\mathbf{Q}$.
This leads to the following result.

\begin{theorem}
	\label{Thm:CTMC Delta Method}
	Assume asymptotic normality holds for all allowed pairs, let $\mathbf{V}_{\hat{\mathbf{Q}}}$ denote the allowed pairs of $\hat{\mathbf{Q}}$ (our MLE estimate) and fix $t$. Then, for each $i,j$ in the state space with $i \neq h$, the variance in $p_{ij}$ is given by,
	\begin{align}
	\label{Eq:Variance estimate}
	\textrm{Var}\Big(\,p_{ij}(\mathbf{V}_{\hat{\mathbf{Q}}} ; t)\,\Big)
	\approx
	\frac{\partial p_{ij}(\mathbf{V}_{\hat{\mathbf{Q}}} ; t)}{\partial \mathbf{V}_{\hat{\mathbf{Q}}}}
	\left( -\mathbf{H}(\hat{\mathbf{Q}})^{-1} \right)
	\left(\frac{\partial p_{ij}(\mathbf{V}_{\hat{\mathbf{Q}}} ; t)}{\partial \mathbf{V}_{\hat{\mathbf{Q}}}} \right)^{\intercal} \, ,
	\end{align}
	provided $\partial p_{ij}(\mathbf{V}_{\hat{\mathbf{Q}}} ; t)/ \partial \mathbf{V}_{\hat{\mathbf{Q}}} \neq 0$,
	where $\frac{\partial}{\partial \mathbf{V}_{\hat{\mathbf{Q}}}} $ denotes the vector constructed by differentiating w.r.t.~each element in $\mathbf{V}_{\hat{\mathbf{Q}}}$ then evaluated at $\hat{\mathbf{Q}}$, and $\mathbf{H}(\hat{\mathbf{Q}})^{-1}$ is the inverse Hessian matrix at the MLE. Moreover, for each $(\alpha, \beta) \in \mathbf{V}_{\hat{\mathbf{Q}}}$,
	\begin{align*}
	\frac{\partial p_{ij}(\mathbf{V}_{\hat{\mathbf{Q}}} ; t)}{\partial q_{\alpha \beta}}
	=
	\left( \exp ( \mathbf{C}_{\eta}^{(\alpha \beta)} t) \right)_{i, h+j}
\quad\textrm{where}\quad
	\mathbf{C}^{(\alpha \beta)}_{\eta}=
	\left[ {\begin{array}{cc}
		\hat{\mathbf{Q}} & \mathbf{e}_{\alpha} \mathbf{e}_{\beta}^{\intercal} - \mathbf{e}_{\alpha} \mathbf{e}_{\alpha}^{\intercal}\\
		0 & \hat{\mathbf{Q}}
		\end{array} } \right] \, .
	\end{align*}
\end{theorem}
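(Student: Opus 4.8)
The plan is to split the statement into two independent ingredients: the variance identity \eqref{Eq:Variance estimate}, which is an instance of the multivariate Delta method applied to the map $\mathbf{V}_{\hat{\mathbf{Q}}} \mapsto p_{ij}(\mathbf{V}_{\hat{\mathbf{Q}}};t)$, and the closed-form expression for the gradient $\partial p_{ij}/\partial q_{\alpha\beta}$, which follows from the matrix-exponential derivative identity \eqref{Eq:Matrix Exponential Derivative Relation} once the parametrisation is handled correctly. I would establish the gradient formula first, since it is what makes \eqref{Eq:Variance estimate} concrete, and then assemble the variance statement from it.

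For the gradient, fix an allowed pair $(\alpha,\beta)$. By Definition \ref{def:AllowedPairs} we have $\hat q_{\alpha\beta}>0$, so $(\alpha,\beta)$ lies in the interior of the parameter space and $p_{ij}(\cdot;t)=(e^{\mathbf{Q}t})_{ij}$ is differentiable there; in particular \eqref{Eq:Matrix Exponential Derivative Relation} applies with $\mathbf{A}=\hat{\mathbf{Q}}$. The one genuine subtlety — and the step I expect to require the most care — is that the free coordinates of $\mathbf{Q}$ are exactly its allowed (off-diagonal) entries, while the diagonal is pinned down by the conservation constraint $q_{\alpha\alpha}=-\sum_{k\neq\alpha}q_{\alpha k}$ of Definition \ref{def:GeneratorMatrixSpace}. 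Thus perturbing $q_{\alpha\beta}$ by $\varepsilon$ forces $q_{\alpha\alpha}$ to change by $-\varepsilon$, so that
\[
\frac{\partial \mathbf{Q}}{\partial q_{\alpha\beta}} \;=\; \mathbf{e}_\alpha \mathbf{e}_\beta^{\intercal} - \mathbf{e}_\alpha \mathbf{e}_\alpha^{\intercal}.
\]
Substituting this rank-one perturbation for $\partial\mathbf{A}/\partial a_{\alpha\beta}$ in the block form of \eqref{Eq:Matrix Exponential Derivative Relation}, the $2h$-by-$2h$ matrix appearing there is exactly $\mathbf{C}_\eta^{(\alpha\beta)}$, and picking out the $(i,j)$ entry of its upper-right $h$-by-$h$ block gives $\partial p_{ij}(\mathbf{V}_{\hat{\mathbf{Q}}};t)/\partial q_{\alpha\beta}=\big(\exp(\mathbf{C}_\eta^{(\alpha\beta)}t)\big)_{i,h+j}$, as claimed. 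The restriction $i\neq h$ is what makes this clean: row $h$ is absorbing, so $p_{hj}$ is constant in the parameters and the statement would be degenerate there.

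For the variance identity, by assumption the MLE $\hat{\mathbf{Q}}$, restricted to its allowed pairs, is asymptotically normal with asymptotic covariance the inverse Fisher information $(-\mathbf{H}(\mathbf{Q}))^{-1}$, using the Hessian formula of Section \ref{Sec:Direct Differentiation}. A first-order Taylor expansion of $\mathbf{V}_{\mathbf{Q}}\mapsto p_{ij}(\mathbf{V}_{\mathbf{Q}};t)$ about the true parameter, together with the continuous mapping theorem and Slutsky's lemma, yields that $p_{ij}(\mathbf{V}_{\hat{\mathbf{Q}}};t)$ is asymptotically normal with variance $\big(\partial p_{ij}/\partial\mathbf{V}\big)\,(-\mathbf{H})^{-1}\,\big(\partial p_{ij}/\partial\mathbf{V}\big)^{\intercal}$, where $\partial p_{ij}/\partial\mathbf{V}$ is the row vector of partials over allowed pairs computed above; the hypothesis $\partial p_{ij}/\partial\mathbf{V}_{\hat{\mathbf{Q}}}\neq 0$ guarantees that this limiting variance is non-degenerate, so the ordinary (first-order) Delta method applies. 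Replacing the true parameter by $\hat{\mathbf{Q}}$ everywhere — justified by consistency and continuity — gives the stated approximation \eqref{Eq:Variance estimate}. Beyond invoking the standard Delta-method machinery (e.g.\ \cite{LehmannCasella1998}), I anticipate no real obstacle here; the substantive work is the gradient computation and the constraint bookkeeping described above.
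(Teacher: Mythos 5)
Your proposal is correct and follows essentially the same route as the paper: the multivariate Delta method of \cite[Theorem 8.16]{LehmannCasella1998} combined with the matrix-exponential derivative identity \eqref{Eq:Matrix Exponential Derivative Relation} from Section \ref{Sec:Direct Differentiation}, with non-degeneracy coming from the negative definiteness of $\mathbf{H}$ at the MLE together with the hypothesis $\partial p_{ij}/\partial \mathbf{V}_{\hat{\mathbf{Q}}}\neq 0$. Your explicit bookkeeping of the diagonal constraint, giving $\partial\mathbf{Q}/\partial q_{\alpha\beta}=\mathbf{e}_\alpha\mathbf{e}_\beta^{\intercal}-\mathbf{e}_\alpha\mathbf{e}_\alpha^{\intercal}$, is exactly the step the paper leaves implicit when it says the gradient ``follows immediately'' from Section \ref{Sec:Direct Differentiation}.
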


The proof of this result is given in Appendix \ref{Sec:Proof of Delta Method}.
The assumption that $\partial p_{ij}(\mathbf{V}_{\hat{\mathbf{Q}}} ; t)/ \partial \mathbf{V}_{\hat{\mathbf{Q}}} \neq 0$ is extremely mild and can be easily checked once the MLE estimate is found.  

At this point, we take advantage of the fact that we have already derived a closed-form expression for the Hessian. Hence we can easily compute \eqref{Eq:Variance estimate}, moreover, it is now straightforward to compute the confidence interval for the transition probabilities. This is an extremely useful result since it allows one to quantify the uncertainty at the level of the estimation of transition probabilities (instead of the generator matrix), and critically, uncertainties in the probability of default. Figures \ref{g2-1} and \ref{g2-2} show such intervals for probability of default estimates from Moody's corporate rating data 2016 and a time horizon of up to 10 years. One can see that this procedure easily allows one to quantify the error of probability of default predictions for arbitrary time horizons. This is especially interesting as this parameter is an important ingredient to the calculation of expected losses over lifetime in the IFRS 9 regulatory framework.

\begin{figure}[!ht]
\centering
\includegraphics[width=.47\textwidth]{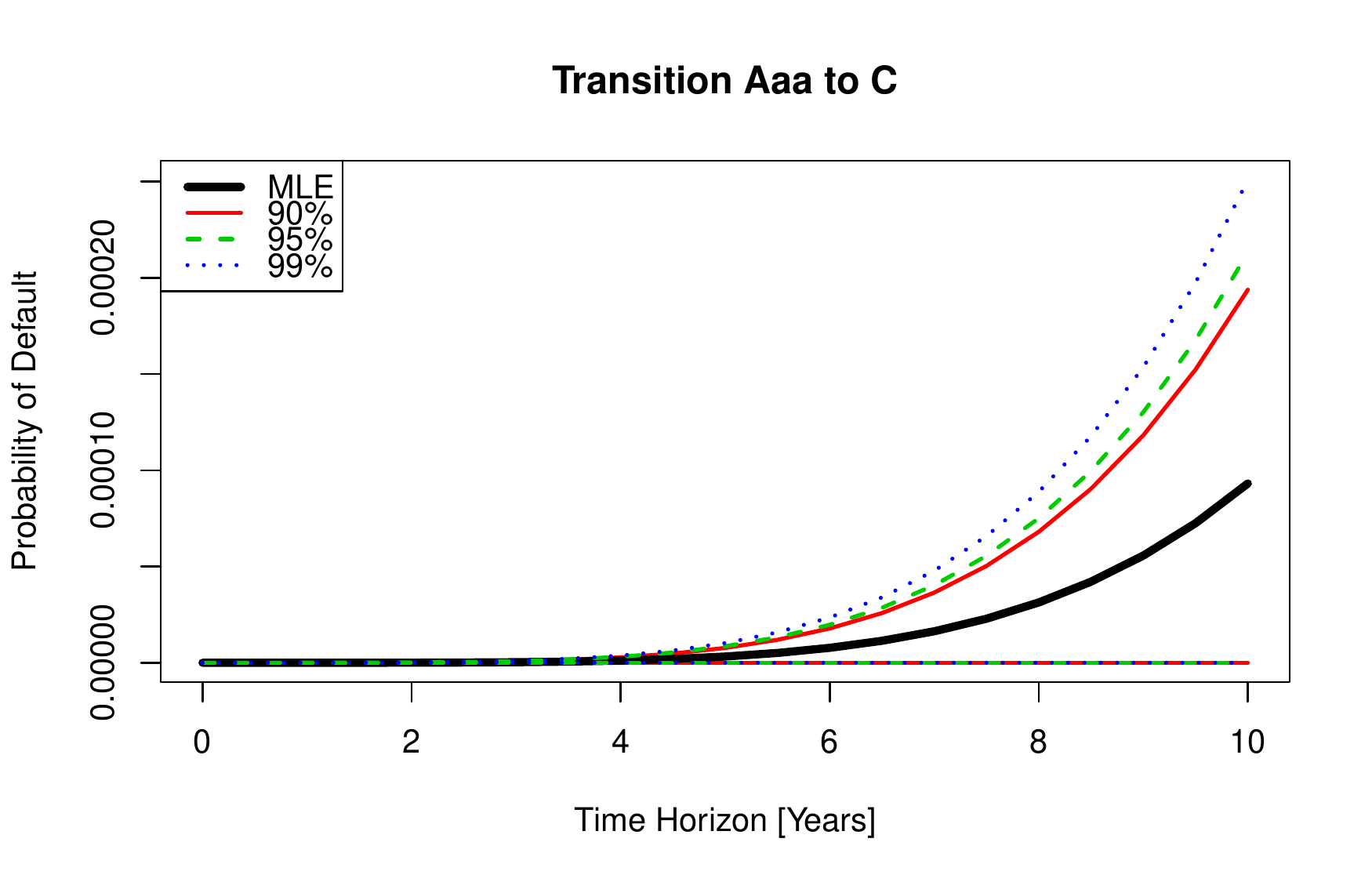}\includegraphics[width=.47\textwidth]{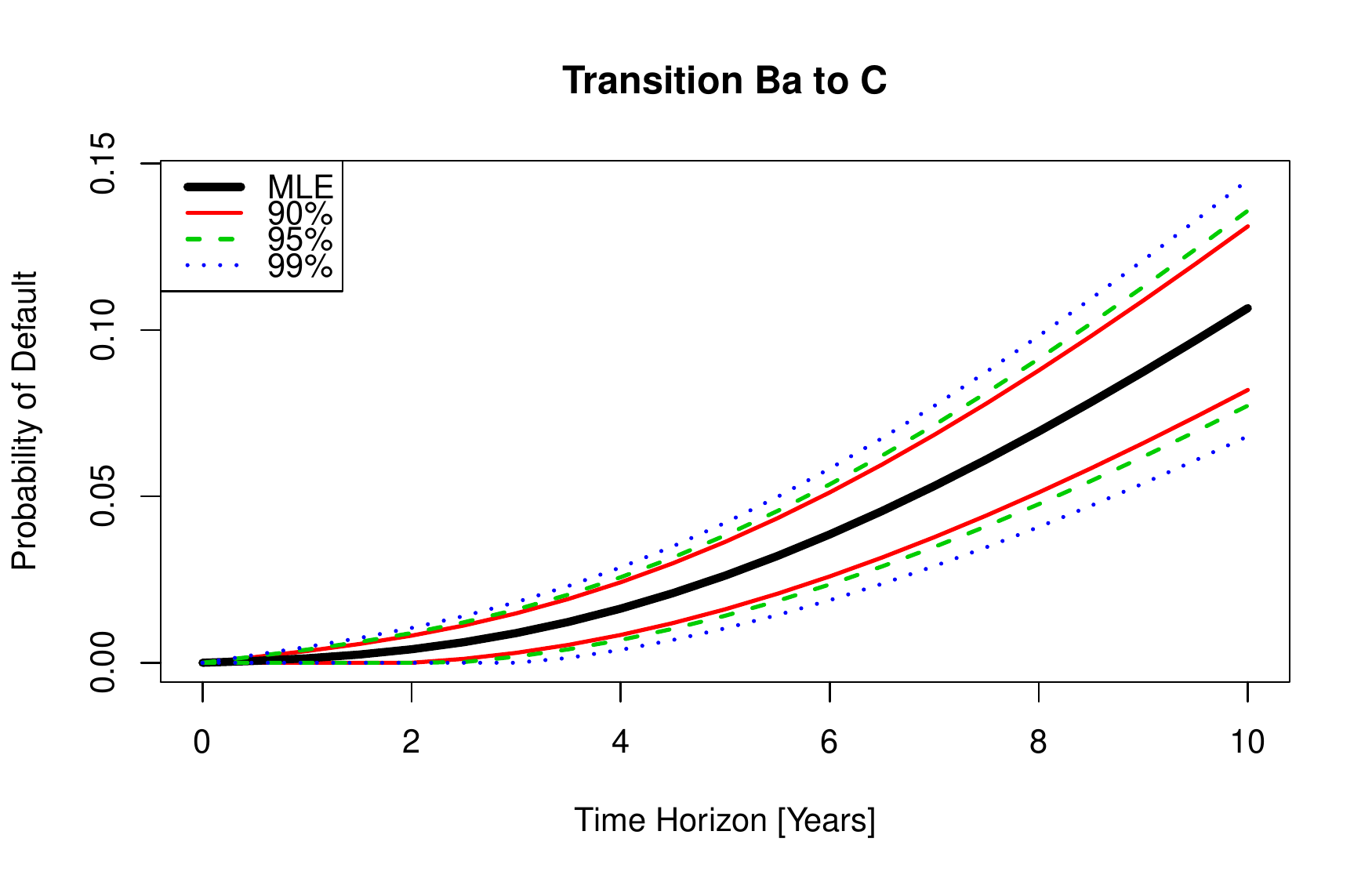}\\
\includegraphics[width=.47\textwidth]{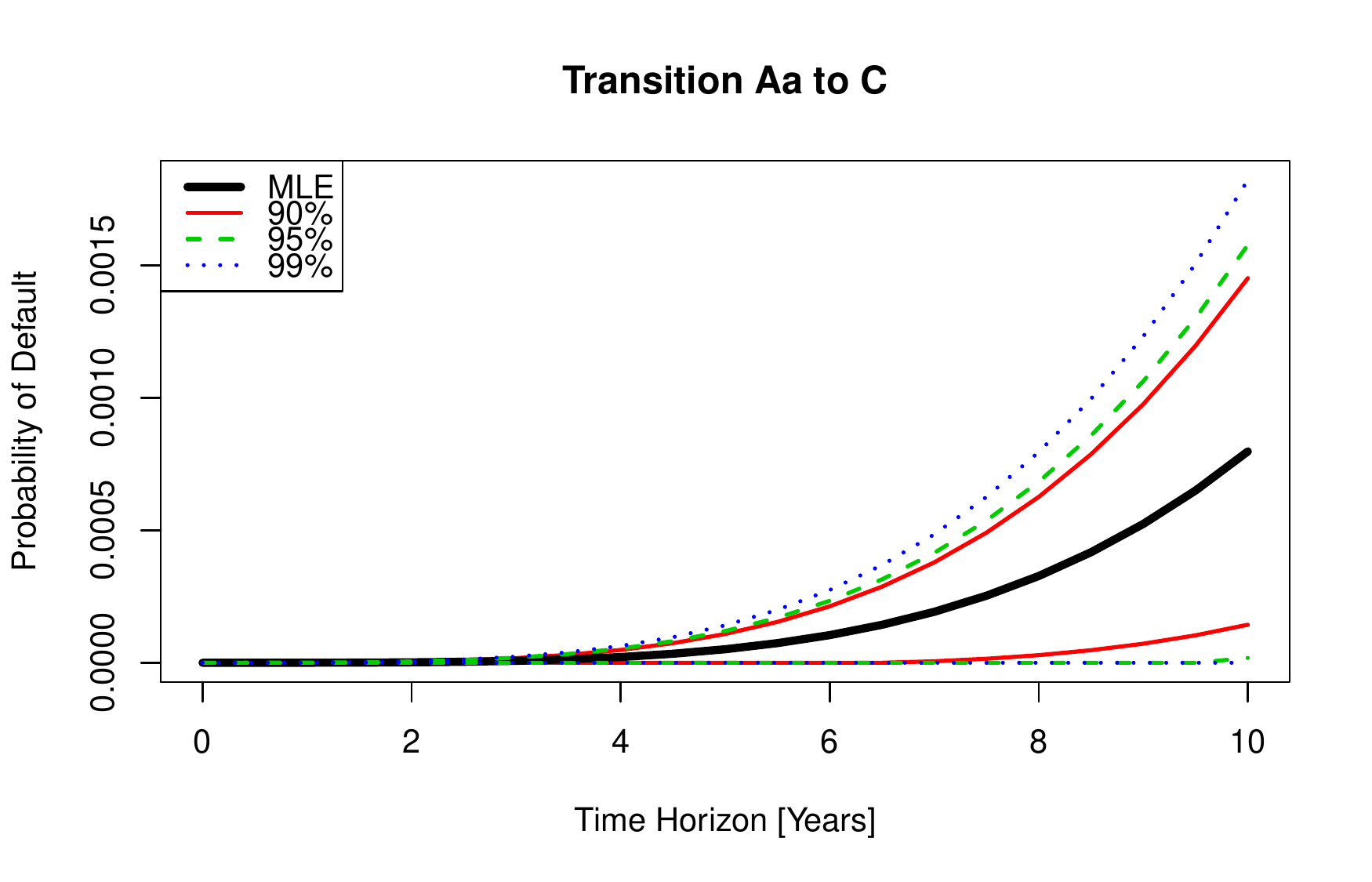} \includegraphics[width=.47\textwidth]{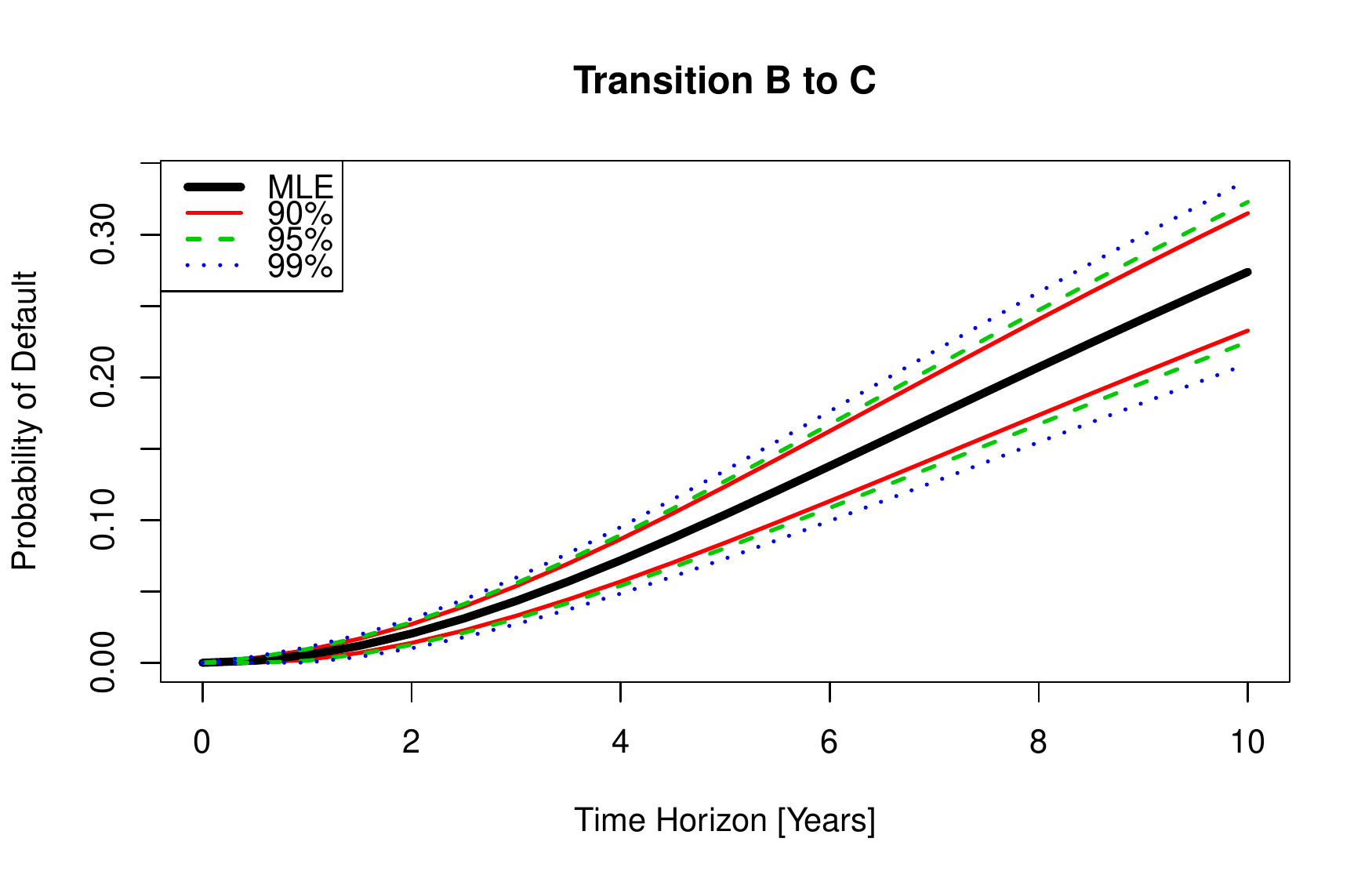}
\caption{Confidence Intervals as maps of time for Discrete-Time Transitions into the Default Category $C$ over 10 years-  Moody's Corporate Rating Discrete-Time Transitions 2016
}
\label{g2-1}
\end{figure}

\begin{figure}[!ht]
\centering
\includegraphics[width=.47\textwidth]{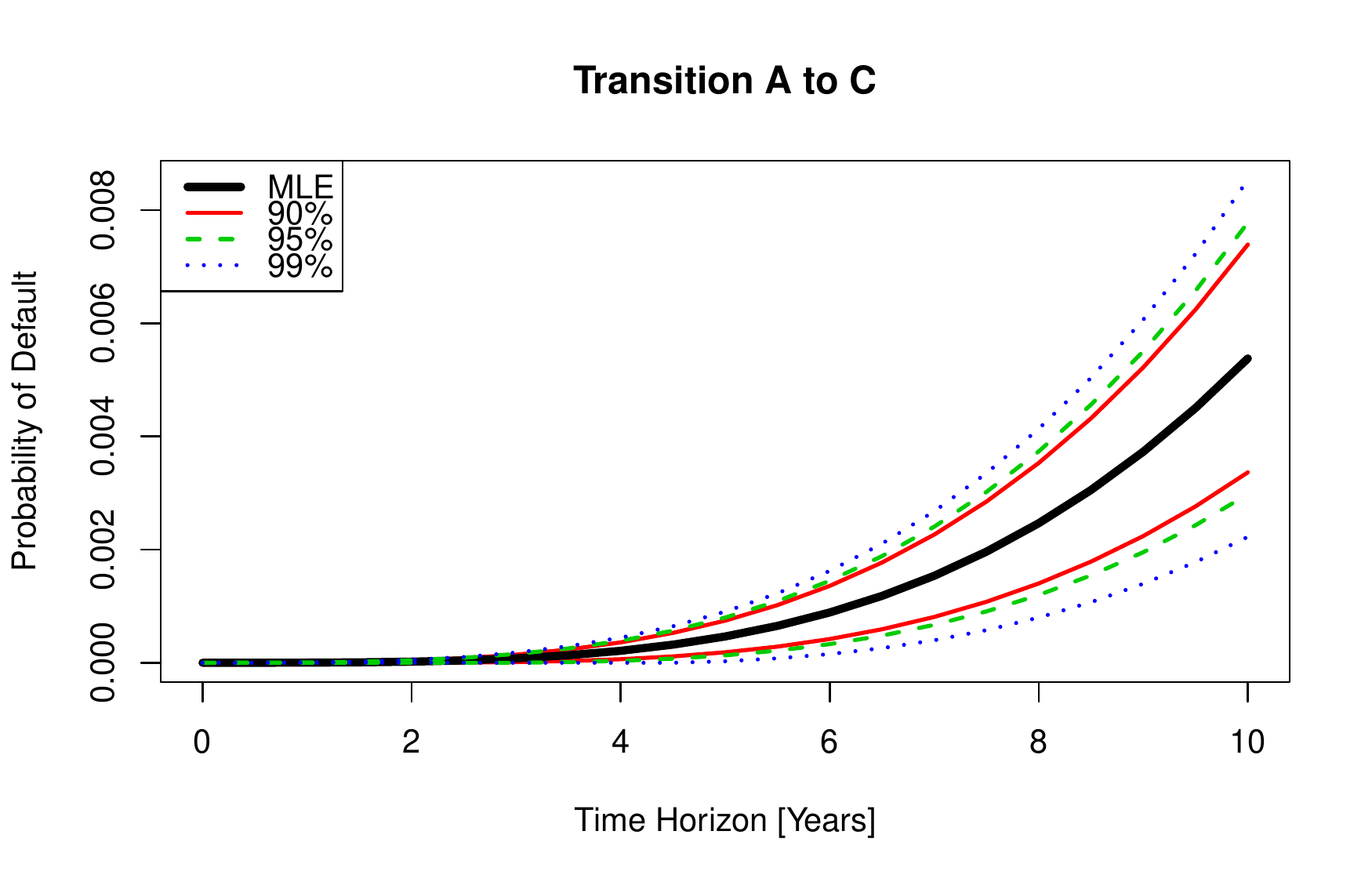}\includegraphics[width=.47\textwidth]{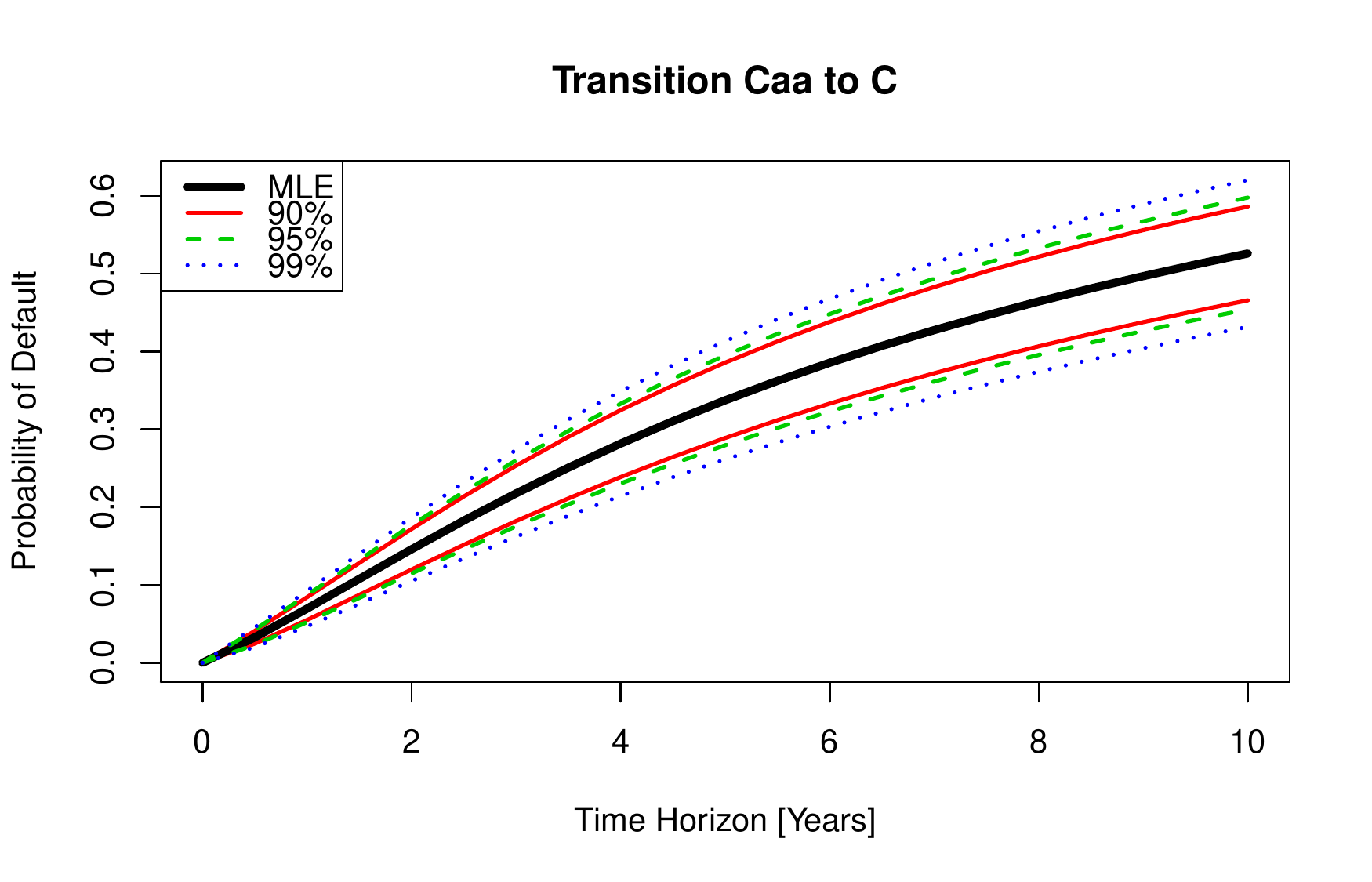}\\
\includegraphics[width=.47\textwidth]{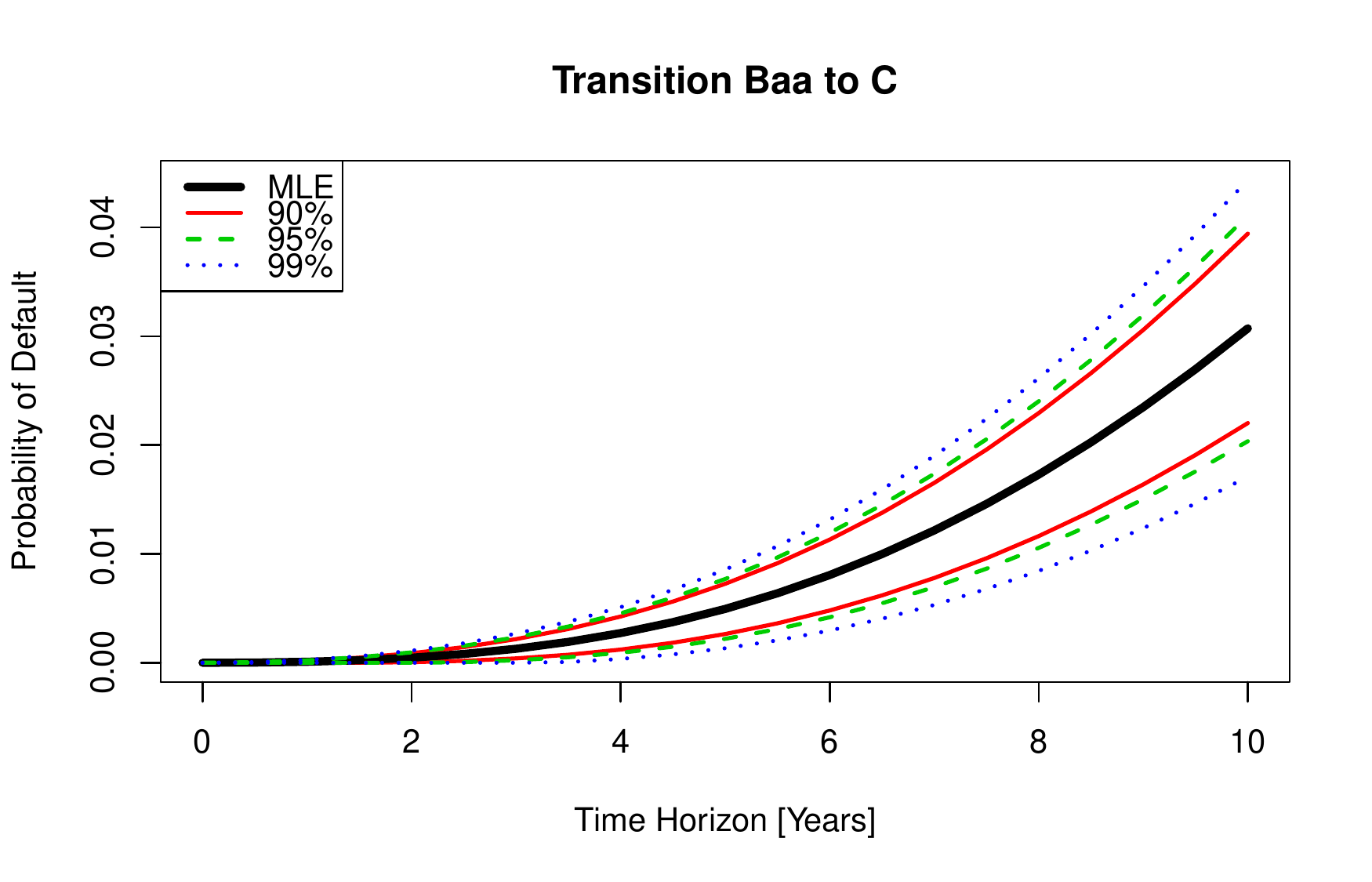}\includegraphics[width=.47\textwidth]{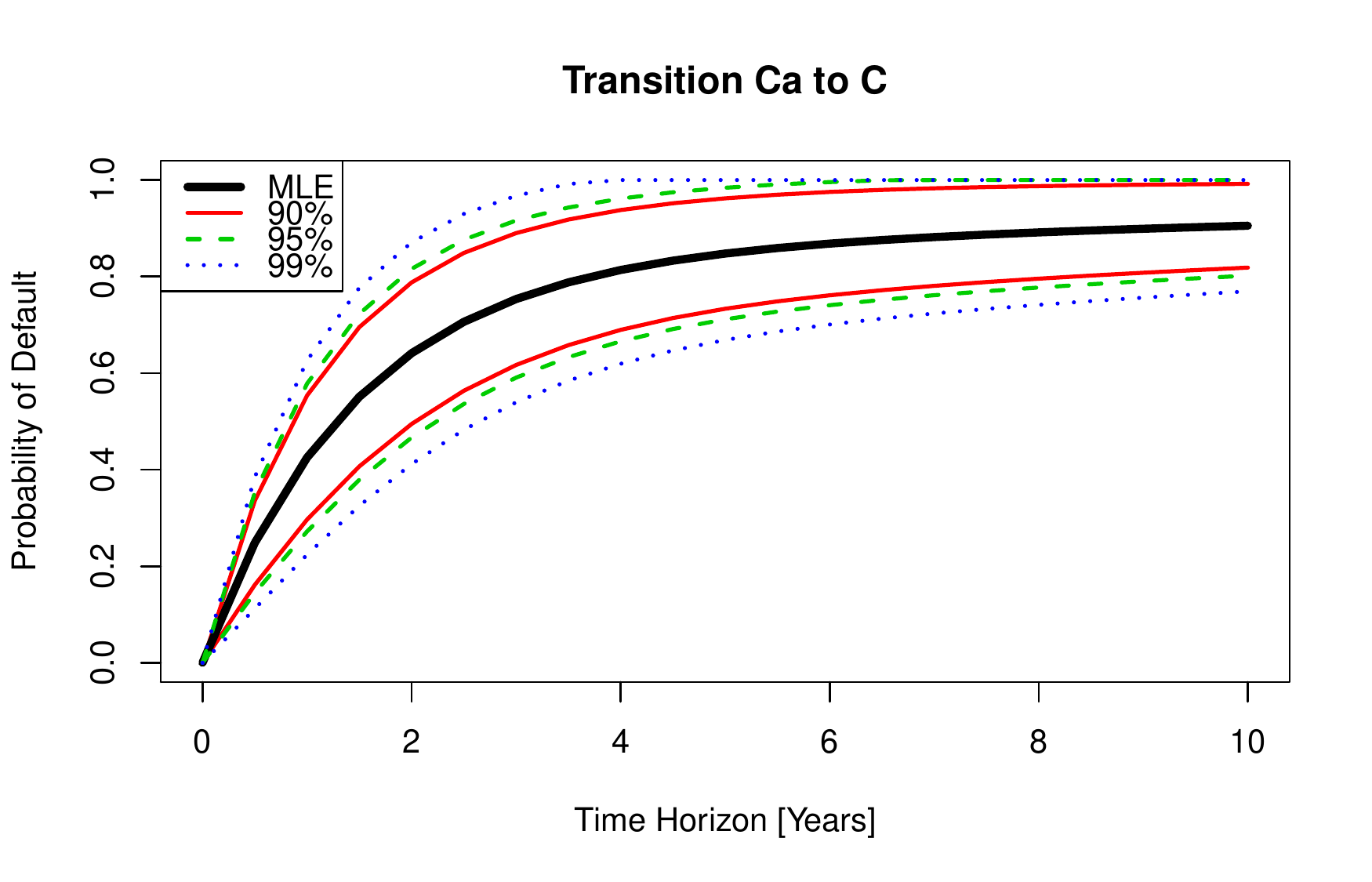}
\caption{Confidence Intervals as maps of time for Discrete-Time Transitions into the Default Category $C$ over 10 years-  Moody's Corporate Rating Discrete-Time Transitions 2016
}
\label{g2-2}
\end{figure}

\subsection{Confidence Intervals w.r.t.~information}
We benchmark our analysis against \cite[Section 4]{ReisSmith2017}. We consider a true generator matrix (which is the MLE Markov generator described in Section \ref{Sec:Examples}) and from that simulate multiple years worth of data which is viewed as empirical data. We then introduce the EM algorithm to increasing amounts of data and assess how the estimate and errors change. By using a known generator, we additionally assess the accuracy of the estimate and error. From a computational point of view, matrix exponentials embed highly nonlinear dependencies in the elements of $\mathbf{Q}$ and $\mathbf{P}$. Therefore, to understand the error we consider how both the error of $\mathbf{Q}$ and $\mathbf{P}$ changes as the amount of information changes.

We consider the scenario of 250 obligors per rating and simulate 50 years worth of transitions (i.e.~the number of companies that made each transition). We then apply the EM algorithm using $1$ year worth of data then 2 years etc up to 50 years. In the case of a company defaulting we replace it with the rating they were pre-default. This implies that the amount of ``information'' obtained from each year is similar. We plot the results in Figure \ref{Pic:EMTPMError}.
\begin{figure}[ht]
	\centering
	\includegraphics[width=15cm]{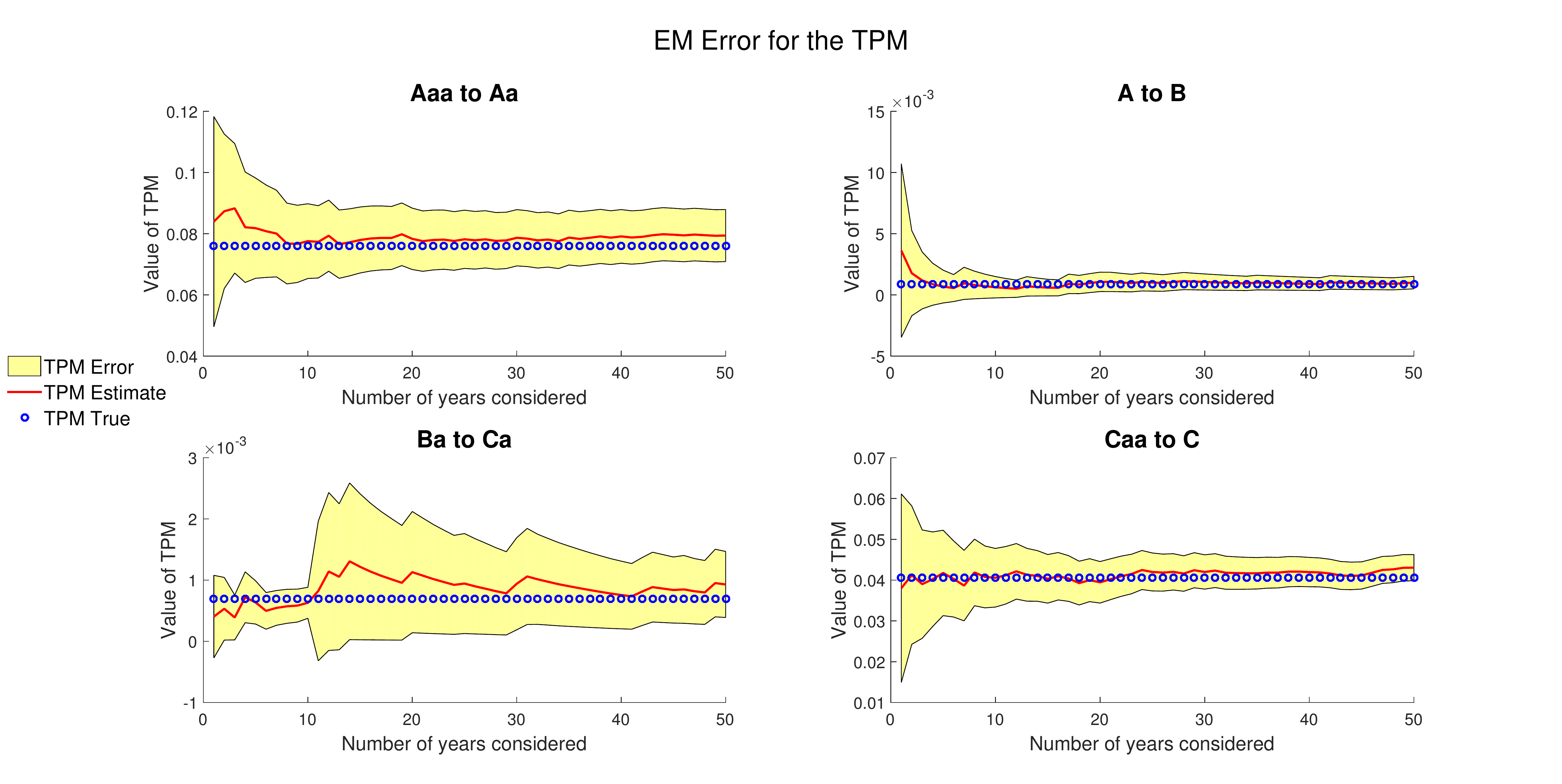}
	\vspace{-0.5cm}
	\caption{Estimated value in some TPM entries and 95\% confidence interval as the amount of data increases.}
	\label{Pic:EMTPMError}
\end{figure}

One observes that in most cases the errors in the TPM behave as expected. The surprising result is the $Ba$ to $Ca$ entry whose error increases. As alluded above, one can only understand the error in the TPM by understanding the underpinning error of the generator estimation. Although, in theory, the $Ba$ to $Ca$ transition depends on all entries in the generator we know that certain entries have a greater impact. We, therefore, look at the error in some important generator entries, Figure \ref{Pic:EMGeneratorError}.
\begin{figure}[ht]
	\centering
	\includegraphics[width=15cm]{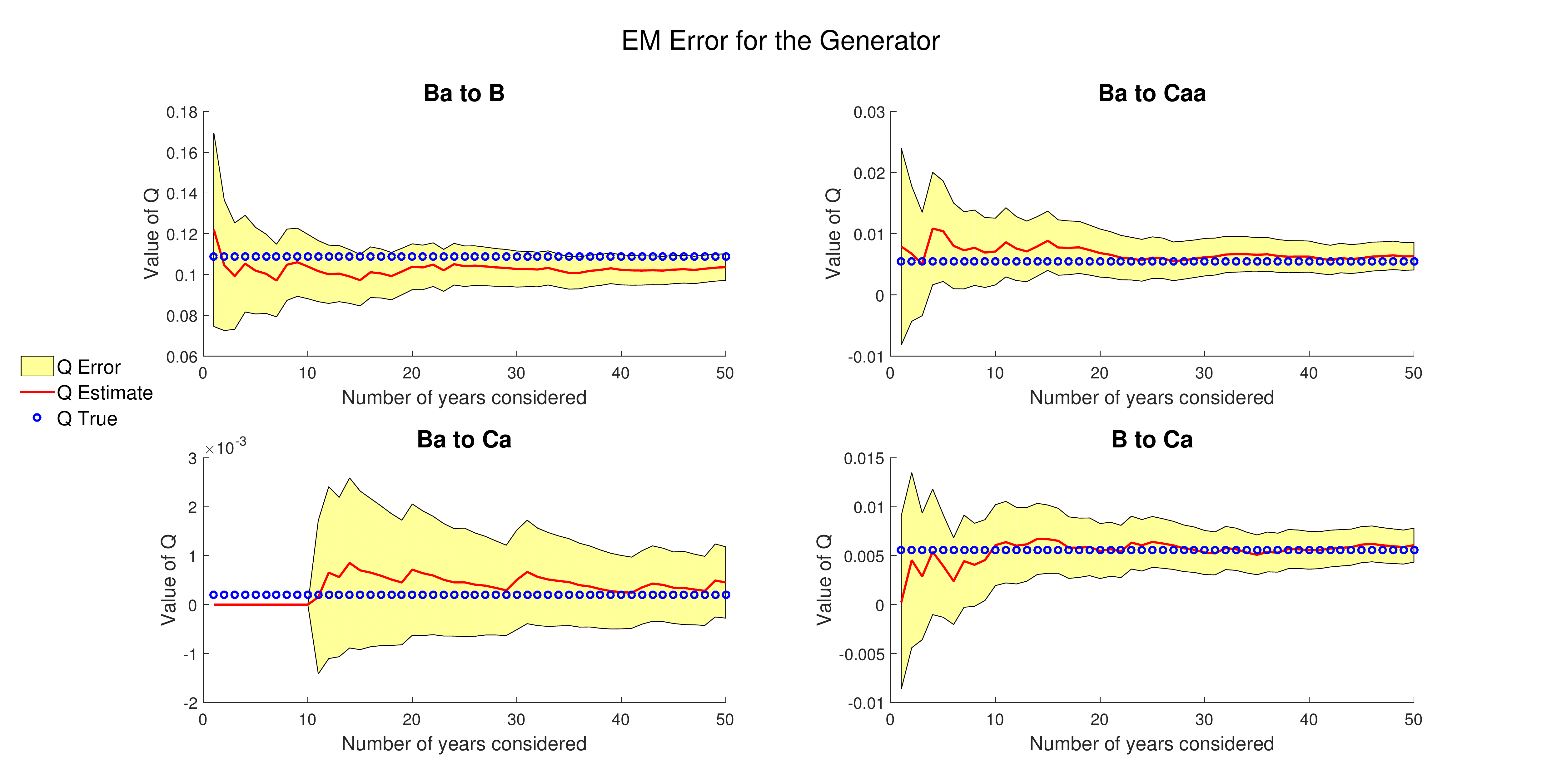}
	\vspace{-0.5cm}
	\caption{Estimated value in the generator and 95\% confidence interval as the amount of data increases.}
	\label{Pic:EMGeneratorError}
\end{figure}

From Figure \ref{Pic:EMGeneratorError} it is clear that the main contributor to the error is (unsurprisingly) the $Ba$ to $Ca$ entry. Initially, we need to wait for a transition from $Ba$ to $Ca$ to happen which increases the likelihood and hence the uncertainty surrounding the estimate. Moreover, it then takes several more years of data before the estimate becomes more stable. This uncertainty in the generator then propagates to uncertainty in the TPM entries, and one observes the extremely strong correlation between the TPM entry and the corresponding generator entry. Due to this, the error in the $Ba$ to $Ca$ transition probability is much larger than the other estimates, even after 50 years of observation. This behaviour in the CTMC modelling is not ideal (and the IFRS 9 regulation exacerbates the effect), but it shows some of the challenges in obtaining good estimates and errors for small probabilities (rare events), namely that the model is still sensitive to individual observations. One can use this to assess the sensitivity in the model, for example, adding one observation of a company defaulting and then recomputing the probabilities and their associated errors will provide an idea of the sensitivity.

%
%
%

%
%
%
%
%

\section{Extending Markov Processes to Capture Rating Momentum}
\label{Sec:Rating Momentum Model}

In this section, we work with the continuously observed data case and hence can broaden our scope of models (we are no longer restricted to Markov models). In the previous section, we highlighted many good features of the EM algorithm, in particular, that one could derive closed-form expressions for the errors. However, the EM algorithm does not generalize well as one quickly runs into difficulties when using models that have more complex likelihoods. This is the case when we generalize to point processes. Before detailing the model we are proposing let us start by showing that the data (see Section \ref{Sec:EM Error}) contains non-Markov features.

\subsection{Testing for non-Markovian phenomena}
In \cite{LandoSkodeberg2002}'s analysis of Standard and Poor's rating data set, the authors tested the presence of rating momentum. For consistency and completeness, we show that \emph{rating momentum} behaviour is also present in Moody's data set.

The test follows a standard semi-parametric hazard model approach developed in \cite{AndersenEtAl1991} (see also \cite{AndersenEtAl2012}). The basic idea is to test whether the intensity (from leaving the state) is influenced by previous transitions, that is, we model the intensity for any given firm, $n$ in state $i$ as,
\begin{align*}
\lambda_{i n}(t)= q_{i}(t) \exp(c Z_{n}(t)),
\end{align*}
where $q$ is an unspecified ``baseline'' intensity\footnote{Observe that we are not assuming that the baseline is time homogeneous in the test.}, $Z$ contains information relating to the firm and $c$ is the coefficient we estimate. One important point here is that we are often dealing with censored observations (many firms stop being rated after a while), hence using hazard models is useful since we have access to the theory of partial likelihoods which can handle censored observations, see \cite{CoxOakes1984}. One can then for example set the covariate $Z$ as,
\begin{align*}
Z_{n}(t)=
\begin{cases}
1, \quad &\text{if firm $n$ was downgraded to its current state,}\\
0, & \text{otherwise.}
\end{cases}
\end{align*}
Hence in this setting the Markov assumption is equivalent to the null hypothesis $c=0$. The general statistical framework including fitting $c$ by maximising the partial likelihood is covered in \cite{AndersenEtAl1991} and \cite[Appendix A]{LandoSkodeberg2002}, but we do not discuss these further here.

The result from this analysis can be seen in Table \ref{t23} -- we can see a statistically significant downward momentum effect (i.e. the null hypothesis is rejected on standard significance levels $\alpha$ of $10\%$, $5\%$ or $1\%$) but no significant upward momentum behaviour in the Moody's data. These findings are consistent with those of \cite{LandoSkodeberg2002}.

\begin{table}[!hbt]
	\centering
	\begin{tabular}{r|c|c}
		&coefficient& $p$-value\\
		\hline
		downward momentum &\phantom{-}0.33010 & $<$0.0001\\
		upward momentum   &-0.01487 &0.68153\\
		\hline
	\end{tabular}
	\caption{Likelihood ratio test for downward and upward momentum.}
	\label{t23}
\end{table}

\subsection{Our new Model to capture Rating Momentum}

As one can see from Table \ref{t23} there is very strong evidence that downward momentum exists in the data. Let us now describe a tractable methodology, using \emph{marked point processes} that can capture this effect. Readers unfamiliar with point processes can consult Appendix \ref{Sec:Overview of Point Processes} for further details. 

The likelihood of a single realisation of a marked point process is given in \cite[p.251]{DaleyVereJones2003}, namely,
\begin{equation}
L= \prod_{i=1}^{N_{g}(T)} \lambda_{g}(t_{i})f(k_{i}|t_{i})e^{-\int_{0}^{T}\lambda_{g}(u) \dd u} \, , \label{Eq:Point process likelihood}
\end{equation}
where we use the following notation, $N_{g}$ is the set of times at which events occur, $\lambda_{g}$ is the intensity, $k$ is the mark and $f$ is the so-called \emph{mark's distribution}. The subscript $g$ is a common notation used to imply that this is the intensity of the ground process, i.e.~we are only considering the events of interest. 
Setting $\lambda=q_{i}$ and $f=q_{ij}/q_{i}$ we recover the likelihood of a CTMC and hence one can see that these processes are generalizations of Markov processes.

To incorporate rating momentum into such models we draw inspiration from Hawkes processes and change the intensity of the model for appropriate rating changes. The basic idea is to start with a CTMC (with generator matrix $\mathbf{Q}$), which acts as a baseline intensity, then add a non-Markov component which is a self-excitation intensity decaying exponentially\footnote{This is a common and well-understood form to use in Hawkes processes, see \cite{BacryMastromatteoMuzy2015}.}. That is, any downgrade observed increases the intensity of then future downgrades for a certain while. We also introduce two types of momentum, one if the company downgrades from investment-grade ($Baa$ and better) and another if the company downgrades from a speculative-grade (this modelling choice is further discussed in Section \ref{Sec:Bayesian Information} and \ref{Sec:Examples}).  
Using the same notation as before, given the state space $\{1, \dots, h\}$ such that state $h$ (default) is absorbing, we model the intensity of the stochastic process $X$ at time $t$ as follows,
\begin{equation*}
\lambda_{g}(t)= \sum_{j=1}^{h-1}q_{j} \1_{\{X(t)=j\}} + \sum_{m=1}^{2}\sum_{\tau \in \tau_{m}(t)} \beta_{m} \alpha_{m} e^{-\beta_{m}(t-\tau)} \, ,
\end{equation*}
where $m$ denotes investment or speculative downgrade, $\tau_{m}(t)$ is the set of downgrade times (of type $m$) prior to time $t$ and $\alpha_{m}$ and $\beta_{m}$ correspond to the intensity and memory of the ``momentum'' in each case. One can note that the intensity of the stochastic process drops (returning to the baseline intensity) as more time elapses since the previous downgrade and this rate is controlled by ${\bm \beta}$. In particular, this allows one to include empirically observed effects such as the momentum's influence reducing over time (see \cite{couderc2008credit}). 

In this set up we add only four parameters to the $\approx (h-1)^{2}$ parameters of the CTMC case; the effectiveness of this parsimony is substantiated below (see Section \ref{Sec:Bayesian Information}). To the best of our knowledge, no other model we are aware of captures the momentum effect so simply. Further parameters and extensions can be introduced, nonetheless, we focus only on this model. Its analysis is found in Section \ref{Sec:Model Calibration} and \ref{Sec:Examples}.

We work under the following modelling assumptions which, we believe to be sufficiently reasonable and keep the model parsimonious (most of these can be easily lifted and the model extended).
\begin{enumerate}
	\item We only consider downward momentum. Since upward momentum is not as statistically significant (Table \ref{t23}) we do not consider it.
	
	\item There are two types of momentum, \emph{investment} and \emph{speculative}. 
	
	Companies being downgraded from investment grades (numerically these are the ratings from $1$ to $(h-1)/2$) feel the investment momentum and remaining downgrades are affected by speculative momentum.
	
	\item Finally (not easy to remove) no points occurred prior to time $0$, the so-called edge effects. This essentially means that companies do not have momentum when they are initially rated.
\end{enumerate}

\begin{remark}
	[Prudent Estimation]
	Since we only consider momentum as a purely negative effect, if we assume a company has no momentum when it initially does then we will obtain more conservative numbers for the downgrades. Therefore in calibration, if one does not use a full history of a company's rating change the model will be more prudent.
\end{remark}

With these assumptions let us define the mark's distribution. We take the following marked distribution (for $X(t_{i}) \in \{1, \dots, h-1\}$, $t_i$ is the time of the $i$th jump),
\begin{align*}
f(X(t^{-}_{i})|t_{i})
& =
\dfrac{\sum_{j,k=1}^{h}q_{j k}\1_{\{X(t^{-}_{i})=j, ~ X(t_{i})=k\}}}{\lambda_{g}(t_{i})} 
\Big(
\1_{\{X(t_{i})<X(t^{-}_{i})\}}
\\
&
\qquad \qquad \qquad +
	\1_{\{X(t_{i})>X(t^{-}_{i})\}}
	\frac{1}{N_{j}} \sum_{m=1}^{2} \sum_{\tau \in \tau_{m}(t_{i})} \beta_{m} \alpha_{m} e^{-\beta_{m} (t_{i}-\tau)} 
	\Big) \, ,	
\end{align*}
where we denote by $t_{i}^{-}$ the time immediately prior to the $i$th jump and $N_{j}$ is the number of states one can downgrade to i.e.~$N_{j}= \sum_{k>j} \1_{\{q_{jk}>0 \}}$. Substituting the intensity and mark distribution into \eqref{Eq:Point process likelihood}, yields the following expression for the likelihood,
\begin{align}
L=
\prod_{i=1}^{N_{g}(T)} 
&\Biggl\{
\bigg(
\sum_{j,k=1}^{h}q_{j k}\1_{\{X(t^{-}_{i})=j, ~ X(t_{i})=k\}}
+
\frac{1}{N_{j}} \sum_{m=1}^{2} \sum_{\tau \in \tau_{m}(t_{i})} \beta_{m} \alpha_{m} e^{-\beta_{m} (t_{i}-\tau)}
\bigg)
\1_{\{X(t_{i})>X(t^{-}_{i})\}}
\notag
\\
&
+
 \sum_{j,k=1}^{h}q_{j k}\1_{\{X(t^{-}_{i})=j, ~ X(t_{i})=k\}}
 \1_{\{X(t_{i})<X(t^{-}_{i})\}}
\Biggr\}
\notag
\\
&
\qquad \times
 \exp\left( -\int_{0}^{T}\sum_{j=1}^{h-1}q_{j} \1_{\{X(u)=j\}}
                   + \sum_{m=1}^{2}\sum_{\tau \in \tau_{m}(u)} \beta_{m} \alpha_{m} e^{-\beta_{m}(u-\tau)} \dd u
				   \right) \, .
\label{Eq:Full Likelihood Momentum}
\end{align}
Note that the likelihood is for the information regarding one company. We can construct the likelihood of multiple companies by taking the product, but it is worthwhile noting that this assumes independence among companies. This is unlikely to be true due to business cycles etc, however, these correlated systemic effects can be introduced into risk modelling using the methods from \cite{McNeilWendin2007}. Hence, we concentrate purely on the idiosyncratic effect of rating momentum.

The integral involving the momentum (last integral in \eqref{Eq:Full Likelihood Momentum}) can be simplified, to
\begin{align*}
\int_{0}^{T}\sum_{\tau \in \tau_{m}(u)} \beta_{m} \alpha_{m} e^{-\beta_{m}(u-\tau)} \dd u
=
\sum_{\tau \in \tau_{m}(T)} \alpha_{m} \left(1- e^{-\beta_{m}(T-\tau)}\right) \, .
\end{align*}
Unlike the CTMC case this likelihood is complex and there appears to be no real simplification, the main reason for this is the time and history dependence amongst jumps for which simplifications of the form $q_{ij}^{\mathbf{K}_{ij}}$ are no longer possible. We proceed forward by relying  on Markov Chain Monte Carlo (MCMC) techniques to estimate the parameters.

\subsection{An MCMC calibration algorithm for the model}

 In the CTMC setting as considered in \cite{BladtSorensen2005}, \cite{BladtSorensen2009} and \cite{ReisSmith2017} the data augmentation step for the CTMC was costly making the algorithm extremely slow compared to other algorithms. In our setting, we have access to a complete data set and this expensive step is avoided. Moreover, the likelihood we deal with is complex and thus MCMC (see \cite{GilksRichardsonEtAl1996}) is one of the few methods that can deliver reasonable estimations. 

The basic set up of MCMC is to estimate the parameter(s) $\theta$ through its posterior distribution given some data $D$, typically denoted $\pi(\theta| D)$. In general, one cannot access this posterior distribution and direct Monte Carlo simulation is not possible as one does not know the normalizing constant. MCMC gets around this by observing through Bayes' formula that,
\begin{align*}
\pi(\theta|D) \propto L(D; \theta) \pi(\theta) \, ,
\end{align*}
where $L$ is the likelihood and $\pi(\theta)$ is the prior distribution of $\theta$. It is then possible to sample from this distribution using the Metropolis-Hastings algorithm with some proposal distribution.

Let $\mathbf{X}$ denote the set of all company transitions. We are interested in obtaining the joint distribution $\pi(\mathbf{Q},{\bm \alpha}, {\bm \beta}|\mathbf{X})$ where $\mathbf{Q}$ is the matrix with the baseline intensities and jump probabilities (has the same form as a generator matrix of a CTMC) and ${\bm \alpha}:=(\alpha_{1}, \alpha_{2})$, ${\bm \beta}:= (\beta_{1}, \beta_{2})$ are the momentum parameters. Since we assume the prior distribution of $\mathbf{Q}$, ${\bm \alpha}$ and ${\bm \beta}$ to be independent, Bayes' theorem implies that,
\begin{equation*}
\pi(\mathbf{Q},{\bm \alpha},{\bm \beta}|\mathbf{X}) \propto \pi( \mathbf{X}| \mathbf{Q},{\bm \alpha},{\bm \beta})\pi(\mathbf{Q})\pi({\bm \alpha})\pi({\bm \beta}) = L\pi(\mathbf{Q})\pi({\bm \alpha})\pi({\bm \beta}) \, ,
\end{equation*}
where $L$ is the likelihood defined in \eqref{Eq:Full Likelihood Momentum}. The full conditional distribution of each parameter is obtained by conditioning on knowledge of all other parameters.

For the priors, firstly for $\mathbf{Q}$, we assume that the initial transitions carry no momentum hence we can set the prior as the CTMC maximum likelihood estimate (MLE) based on the initial transitions. We therefore set the prior as exponential with the mean being the MLE. For ${\bm \alpha}$ and ${\bm \beta}$, we use a Gamma random variable with a reasonable variance as the prior. This is to reflect that we have far less knowledge for these parameters but do not expect them to be either zero or too large.

The next issue we tackle is how to simulate from the full conditional distribution. Dealing with the parameters of the model first, their full conditional distributions are clearly not standard distributions so we use the single-component Metropolis-Hastings algorithm. As always with Metropolis-Hastings we need to define a good proposal function. In order to avoid a high number of rejections, we take our proposal as a Gamma random variable with mean as the current step and a small variance. In effect, this creates a random walk type sampling scheme that is always nonnegative. Therefore, if we denote the set of parameters by $\gamma$ and the proposal distribution by $\psi$ (which can depend on the current parameters), the $n$th step acceptance probability of a proposed point $\gamma_{s}$ given the current $\gamma_{s}'$ is given by,
\begin{equation*}
\frac{\pi(X|\gamma_{s}, \gamma_{n, -s})\pi(\gamma_{s}) \psi(\gamma_{s}'| \gamma_{s})}{\pi(X|\gamma_{s}', \gamma_{n, -s})\pi(\gamma_{s}') \psi(\gamma_{s}| \gamma_{s}')} \, ,
\end{equation*}
where $\gamma_{n, -s}$ denotes the set of parameters at the $n$th update not including the $s$ parameter.

\subsubsection{Model Calibration}
\label{Sec:Model Calibration}
Now that we have the necessary tools, we can calibrate our model using Moody's data set. Running $11000$ MCMC iterations (taking $1000$ burn in) we obtain the following results\footnote{The MCMC algorithm, written in MATLAB, took $\approx 8.5$ hours to run on a Intel Xeon E7-4660 v4 2.2GHz processor.}. For the Markov style ``base'' component,
\small
\begin{align*}
\mathbf{Q}=
\left(
\begin{array}{ccccccccc}
\text{Aaa} & \text{Aa}  & \text{A} & \text{Baa}  & \text{Ba}  & \text{B}  & \text{Caa}  &  \text{Ca}  & \text{C} 
\\
 -0.0869 &    0.0836 &   0.0031 &  0       & 0.0002    &  0     & 0     & 0   &   0
 \\
 0.0117  & -0.1088   & 0.0942   & 0.0025   & 0.0003  &  0.0001   &      0 &        0  &    0
 \\
 0.0006  &  0.0240   & -0.0938  &  0.0666  &  0.0017 &   0.0007  &  0.0002 &       0   &    0
 \\
 0.0002  &  0.0016   & 0.0387   & -0.0947  &  0.0496 &   0.0040  &  0.0006 &   0.0000  &       0
 \\
 0.0001  &  0.0006   & 0.0033   & 0.0636   & -0.1774 &   0.1060  &  0.0037 &   0.0001  &       0
 \\
 0.0000  &  0.0003   & 0.0012   & 0.0035   & 0.0503  & -0.1610   & 0.1012  &  0.0040   & 0.0004
 \\
 0  &  0.0002   & 0.0001   & 0.0013   & 0.0048  &  0.1028 &  -0.1976  &  0.0622   & 0.0261
 \\
 0   &      0   & 0.0018   & 0.0029   & 0.0050  &  0.0447 &   0.1346  & -0.2838   & 0.0948
 \\
 0    &     0    &     0   &      0    &     0  &      0  &       0   &      0    &     0
 \end{array}
 \right),
 \end{align*}
\normalsize
and for the momentum parameters,
\begin{align*}
{\bm \alpha}= (0.031, 0.1291 )
\qquad
\text{and}
\qquad
{\bm \beta}=(3.5234, 1.7095).
\end{align*}

One interesting observation arising from calibration is the difference of momentum parameters across the investment and the speculative downgrades. There is apparently more momentum in the speculative downgrades than in the investment downgrades, namely, the momentum intensity is larger and lasts longer in speculative grades\footnote{Note that both $(0.1 \approx) \,\alpha_{1}\beta_{1}< \alpha_{2}\beta_{2} \,(\approx 0.2)$ and $\beta_{1}>\beta_{2}$.}. 

This may seem counter-intuitive, however, setting a credit rating ultimately involves combining information from various sources and making a judgement on the exposure of that company (sovereign) to different risks. As discussed in \cite[Chapter 5 and 6]{couderc2008credit}, there appears to be a noticeable difference on which information influences downgrades/defaults for investment-grade and speculative-grade obligors. This points towards an intrinsic difference between these classes of ratings and thus it is not too surprising that our momentum model also shows a difference. From a practical point of view, the model suggests that a downgrade in a speculative-grade company is more damming for future performance, the information that influences speculative-grade rating changes implies deeper issues within the company and hence higher chances of further downgrades/default.

\subsection{Bayesian Information Criterion}
\label{Sec:Bayesian Information}

Let us give some justification for the use of this model. We have argued that a point process style model is a strong choice and to keep the model as robust and simple as possible we added four extra ``momentum parameters'' (with relation to the CTMC model). We believe four to be the optimal choice due to the fact that only adding two parameters does not yield as good a fit to the observed data and adding parameters to every rating does not seem appropriate, since we do not have enough transitions across all ratings to obtain a reliable fit. We therefore did not consider more momentum groups than investment and non-investment grade.

As we have access to a full data set, one can directly calculate the MLE for the $\mathbf{Q}$ generator matrix of the Markov model setting. Therefore we can test our momentum model against the purely Markov model.

The Markov model is a particular case of our momentum model, set $\alpha_{i}=0$ and $\beta_{i}$ a constant for $i \in {1,2}$. Hence, a priori the non-Markov model stands to fit the data better (in the sense of achieving a likelihood at least as large). The question we look to answer is, are we actually capturing the data better or just overfitting? To do this we calculate the Bayesian Information Criterion (BIC), it is a common test used in statistics for model selection and is known to penalize model complexity more than other statistical tests, such as the Akaike information criterion (see \cite[Chapter 3]{ClaeskensHjort2008}). We believe this feature makes the BIC a good test to justify our more complex model. The BIC for a model $M$ can be written as (some authors use the negative of this)
\begin{align*}
\text{BIC}(M)= 2 \log\big( L(M|D) \big) - \log(n) \text{dim}(M) \, ,
\end{align*}
where $n$ refers to the number of data points and $\text{dim}(M)$ is the number of parameters in the model. From a given set of models, the model with the largest BIC is taken as the better one. Naturally, the indicator of how much ``better'' one model is over another is the difference in the BIC, where a BIC difference strictly greater than $10$ is taken as very strong evidence of the model superiority.

\begin{table}[!ht]
	\centering
	\small
	\begin{tabular}{ c | c }
		~  &  BIC
		\\ \hline
		Difference & $138.5 \gg 10$
		\\ \hline
	\end{tabular}
	\caption{The BIC difference between the non-Markov and Markov model on the Moody's dataset.}
	\label{Table:Model Selection}
\end{table}

The result in Table \ref{Table:Model Selection} gives us confidence that our non-Markov model captures reality better without overfitting and with sufficient parsimony with relation to the Markov (CTMC) one.

\subsection{Examples and testing}
\label{Sec:Examples}

\emph{Probabilities of default as maps of time: Markov Vs.~non-Markov.} One important aspect of the non-Markov theory is how it impacts the estimates for the TPM and the transition probabilities.

\begin{remark}
	[Obtaining transition probabilities and model simulation]
	In the standard Markov set up, the TPM is calculated using \eqref{Eq:P and Q relation}. In the non-Markov set up we do not have such a simple relation, hence we are forced to use Monte Carlo techniques. In this case, we prescribe multiple companies in each rating at the start (we used a total of $10^{7}$) and simulate individual transitions according to the point process model. By recording the rating of each company at various points in time (see below)  we can then build transition matrices over several time horizons in the same way one builds an empirical TPM. 
	
	The simulation of our momentum model is similar to that of a standard CTMC, i.e. based on the current state one simulates a ``jump time'' then simulates the new state to jump into. The main difference here is the added complexity of the time and history dependence that exists in our momentum model. To simulate the jump time of a fixed company we use the standard accept/reject method introduced in \cite[Algorithm 2]{Ogata1981} for varying intensities. For each accepted jump time, we then calculate the transition probabilities based on this time and simulate the jump to the next state. We then repeat this process for each company up until the time horizon required.
\end{remark}

It is of particular interest to understand how the evolution in time of the probabilities of default change when using the CTMC Markovian and our non-Markovian model. Using the calibrated model, Figure \ref{Fig:ProbOfDefaults} details the probabilities of defaults for the various ratings as maps in time.

\begin{figure}[!ht]
	\centering
	\includegraphics[width=15cm]{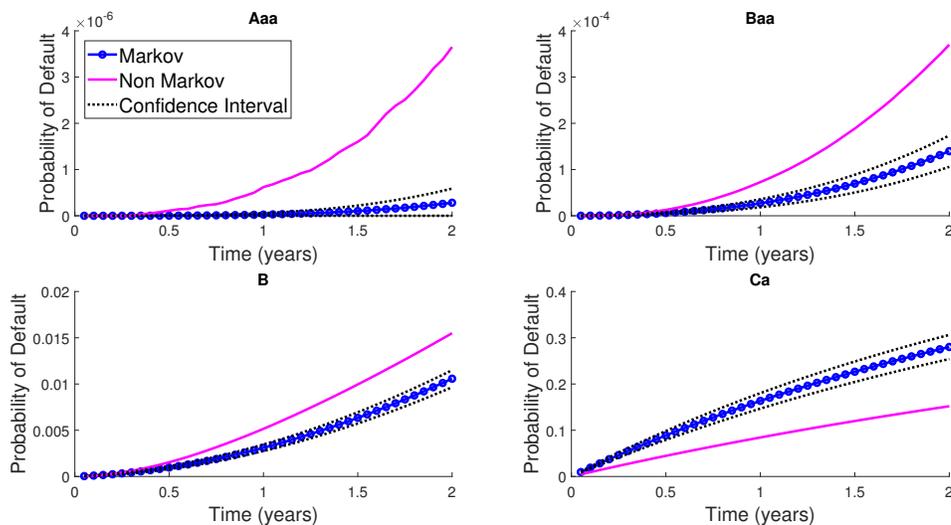}
	\vspace{-0.5cm}
	\caption{The probability of default given by each model for various ratings as a function of time.
}
	\label{Fig:ProbOfDefaults}
\end{figure}

The first observation one can make from Figure \ref{Fig:ProbOfDefaults} is, the non-Markov model produces higher probabilities of default, except for the $Ca$ rating (the non-Markov default probability is also lower for rating $Caa$). The reason for this is precisely the non-Markovianity in the data. In a Markov framework, all companies in the same rating are treated the same, consequently, it is unlikely that an investment-grade company will continue to downgrade quickly while the non-Markov model allows for this.

On the other hand, companies may enter rating $Ca$ before defaulting, hence in the momentum model, some companies in this rating are carrying an extra term making default more likely. This implies we can account for a larger number of defaults while keeping the $\mathbf{Q}$ matrix $Ca$ to $C$ entry smaller. This is not the case in the Markov model and thus to produce enough defaults from $Ca$ one makes the $\mathbf{Q}$ matrix entry larger. Consequently, the Markov model overestimates the default probability for obligors initially rated $Ca$.

\emph{Probabilities of default: Empirical Vs.~Markov Vs.~non-Markov.}
To test how reliable these results are, we can compare one-year probabilities of default as estimated from each calibrated model compared to that we observe from the data. To do so, we fix some time horizon $T$ (one-year here) and consider all companies that have either defaulted or not withdrawn by this period. We then build an empirical TPM over this horizon based on the company's rating at time zero and $T$. Concentrating solely on probabilities of default we obtain the results in Table \ref{Table:Results from Algorithms}.

\begin{table}[!ht]
	\centering
	\footnotesize
	\begin{tabular}{ c | c |  c | c | c | c | c | c  |c }
\multirow{2}{*}{\backslashbox{Model}{Ratings}}  & \multicolumn{4}{c|}{Investment-Grade} & \multicolumn{4}{c}{Speculative-Grade}
	\\
	\cline{2-9}
		& Aaa & Aa & A & Baa & Ba & B & Caa & Ca
		\\ \hline
		Empirical  & 0 & 0 & 0 & 0.0004000 & 0.0005 & 0.0012 & 0.0064 & 0.0563
		\\
		non-Markov & $1\times 10^{-6}$ & $4\times 10^{-6}$ & 0.0000125 & 0.0000734 & 0.0011 & 0.0052 & 0.0298 & 0.0845
		 \\
		 Markov & $3\times 10^{-8}$ & $2.5\times 10^{-7}$ & $4.86\times 10^{-7}$ & 0.0000271 & 0.0002 & 0.0031 & 0.0407 & 0.1635 \\
		\hline
		\hline
		{\# Companies per}
		  & \multirow{2}{*}{413} & \multirow{2}{*}{1313} & \multirow{2}{*}{2232} & \multirow{2}{*}{2318} & \multirow{2}{*}{2021} & \multirow{2}{*}{4504} & \multirow{2}{*}{1333} & \multirow{2}{*}{59} \\		
			rating at $t=0$ &  &  &  &  &  &  &  &  \\		
		\hline
	\end{tabular}
	\caption{Comparing one-year probability of defaults of each model against the empirical observations. For reference we add explicitly the number of companies per rating at starting time $t=0$ as in the Moody’s data set.}
	\label{Table:Results from Algorithms}
\end{table}

The results in Table \ref{Table:Results from Algorithms} are interesting because they highlight stark differences in the models. Starting with the investment-grade, unfortunately, we do not have enough data to fully assess default probabilities at this level. The only grade for which a default within a year is observed is $Baa$ and is higher than what both models predict. One reason the momentum model may not capture this probability as well is the way we have set up the momentum parameters, i.e. an investment and speculative set, and $Baa$ is at the turning point. On the other hand, this number is estimated from a smaller number of defaults so is subject to a larger error. Comparing the Markov and non-Markov, it is unsurprising that our model makes investment-grade defaults more likely.

For the speculative-grades, one observes that $Ca$ and $Caa$ firms have lower one-year default probabilities in the non-Markov model and these estimates are closer to the empirical observations. This is exactly due to the reason mentioned previously, companies downgrading into $Ca$ and $Caa$ ``poison'' the data in the Markov setting.  
Implying that in a Markov world a company initially rated $Caa$ or $Ca$ is viewed to be riskier than it actually is. 

The difference between the models may have a large impact on a bank's capital requirements for regulation. Although the non-Markov model makes most ratings riskier than the Markov model, we feel it provides a more accurate reflection of default risk.

\begin{remark}
[Limitations from censored data]
	Unfortunately, in our study, we are limited to small-time horizons due to censored data. Namely, since the default is absorbing, as soon as a company defaults, we keep that information up to the terminal time. However, many companies are only rated over a few years before withdrawing and therefore if we look at empirical TPMs over longer horizons they are built with less (non-default) data. Since we do not want to use the Markov assumption, there does not appear to be a way to incorporate this lost data. Therefore we can only obtain ``accurate'' numbers on short time scales.
\end{remark}

%
%
%
%

\section{Summary}

In the first part of this paper we have shown how one can evaluate errors in the transition matrices of continuous-time Markov chains at the level of discretely observed data using new closed-form expressions. These results reduced the computation of confidence intervals to less than one half of the time needed by current approaches. Moreover, and of practical importance, by employing the Delta method, our results provide an intuitively interpretable understanding of uncertainty in the model output, the probabilities of default.

In the second part, we have shown the significance of being able to capture non-Markov effects in rating transitions. Comparing against empirical probabilities of default and the classical Markov chain model, one finds a tendency for the Markov chain model to overestimate on some speculative-grades and underestimate on investment-grades. We address this issue by providing a parsimonious model that better captures default probabilities (where empirically observed). Moreover, the non-Markov model points towards significantly higher probabilities of default for investment-grades, where such values are not empirically observed, thus making it more prudent. We believe that the model we present provides a more accurate view of reality and hence should be considered in credit risk modelling. These observations further highlight the importance of understanding so-called \emph{model risk} and its potential impact in quantitative risk analysis in general.

\section*{Acknowledgements and Funding}
The authors would like to thank Dr.~R.~P.~Jena at Nomura Bank plc London, Prof.~M.~Fischer at Bayerische Landesbank Munich, also H.~Thompson and M.~Chen from the Zurich Insurance Group and Dr.~M.~de Carvalho at the University of Edinburgh for the helpful comments. The authors would like to thank two anonymous referees for their comments and suggestions to clarify the manuscript.
\medskip
\small

G. dos Reis acknowledges support from the \emph{Funda{\c c}$\tilde{\text{a}}$o para a Ci$\hat{e}$ncia e a Tecnologia} (Portuguese Foundation for Science and Technology) through the project [UID/MAT/00297/2019] (Centro de Matem\'atica e Aplica\c c$\tilde{\text{o}}$es CMA/FCT/UNL).

M.~Pfeuffer acknowledges funding by DZ Bank Foundation and Universit\"atsbund Erlangen-Nuremberg (grant [S020/10264/17]).

G.~Smith was supported by The Maxwell Institute Graduate School in Analysis and its Applications, a Centre for Doctoral Training funded by the UK Engineering and Physical Sciences Research Council (grant [EP/L016508/01]), the Scottish Funding Council, the University of Edinburgh and Heriot-Watt University. The views expressed in the paper are solely those of the author and do not represent the views of his employer, Moody’s Analytics, its parent company (Moody’s Corporation) or its affiliates. 

\small


\bibliographystyle{rQUF}




%
%
%
%

\appendix

\section{Fundamentals of Discretely Observed Markov Processes}
\label{Sec:Fundamentals of Markov}

The EM algorithm for this problem (See Section \ref{Sec:EM Error}) is discussed in detail in \cite{BladtSorensen2005} and \cite{ReisSmith2017} and we encourage the reader to consult these texts for further information. For completeness, we present a brief review of the EM algorithm for the setting of continuous-time Markov chains. 

For convergence of the EM algorithm, one works under the following assumption. 
\begin{assumption}[Element constraint]
	Similar to \cite{BladtSorensen2005}, we will use a manual space constraint to obtain the convergence. Take $1>\epsilon > 0$, such that for $i\neq j$, $q_{ij}< 1/\epsilon$. Moreover, we assume adjacent mixing, namely, for $i \in \{2,\dots, h-1\}$, $q_{i, i\pm 1}>\epsilon$ and $q_{1,2}> \epsilon$.  
	
We denote the space of generator matrices which satisfy this condition as $\Lambda_{\epsilon}$.
\end{assumption}
This assumption is a trivial constraint when one works in credit risk as it requires that: (a) firms can be upgraded or downgraded by one rating which is clearly the case; and (b) that changes in ratings do not happen too fast which is also the practical case. 

Let $(X(t))_{t\geq 0}$ be a stochastic process over the finite state space $\{1, \dots, h\}$. Associated to $X(t)$ is, for $i,j$ in the state space, $\mathbf{K}_{ij}(t)$ the number of jumps from $i$ to $j$ in the interval $[0,t]$ and by $\mathbf{S}_{i}(t)$ the holding time of state $i$ in the interval $[0,t]$. The EM algorithm is then given by,
\begin{enumerate}[(i)]
	\item Take an initial intensity matrix $\mathbf{Q}$ and a small positive value $\epsilon$, so $\mathbf{Q} \in \Lambda_{\epsilon}$.
	
	\item While the convergence criterion is not met and $\mathbf{Q} \in \Lambda_{\epsilon}$,
	\begin{enumerate}[(1)]
		\item E-step: calculate $\bE_{\mathbf{Q}}[\mathbf{K}_{ij}(T)| \mathbf{P}]$ and $\bE_{\mathbf{Q}}[\mathbf{S}_{i}(T)|\mathbf{P}]$.
		\item M-step: set $q'_{ij}=\bE_{\mathbf{Q}}[\mathbf{K}_{ij}(T)|\mathbf{P}]/\bE_{\mathbf{Q}}[\mathbf{S}_{i}(T)|\mathbf{P}]$, for all $i \neq j$ and set $q_{ii}$ appropriately.
		\item Set $\mathbf{Q}=\mathbf{Q}'$ (where $\mathbf{Q}'$ is the matrix of $q'$s) and return to E-step.
	\end{enumerate}
	
	\item End while and return $\mathbf{Q}$.
\end{enumerate}

By \cite[Theorem 2.10]{ReisSmith2017}, provided the algorithm does not hit the boundary of $\Lambda_{\epsilon}$, we obtain convergence (in distribution and parametric) to a stationary point.
Typically the E-step in the EM algorithm needs to be calculated numerically, however \cite{ReisSmith2017} following \cite{VanLoan1978} and \cite{Inamura2006} obtained the following result.
\begin{proposition}
	Let $e_{i}$ be the column vector of length $h$ which is one at entry $i$ and zero elsewhere, further let us define the $2h$-by-$2h$ matrices $\mathbf{C}^{(\alpha \beta)}_{\gamma}$ and $\mathbf{C}^{(\alpha)}_{\phi}$ as,
	\begin{equation*}
	\mathbf{C}^{(\alpha \beta)}_{\gamma} :=
	\left[ {\begin{array}{cc}
		\mathbf{Q} & q_{\alpha \beta} \mathbf{e}_{\alpha} \mathbf{e}_{\beta}^{\intercal} \\
		0 & \mathbf{Q}
		\end{array} } \right]
	\qquad
	\text{ and }
	\qquad
	\mathbf{C}^{(\alpha)}_{\phi} :=
	\left[ {\begin{array}{cc}
		\mathbf{Q} & \mathbf{e}_{\alpha} \mathbf{e}_{\alpha}^{\intercal} \\
		0 & \mathbf{Q}
		\end{array} } \right]\,
	\quad \alpha,\beta\in\{1,\cdots,h\}.
	\end{equation*}
	Consider a CTMC $X$ observed at $n$ time points $0 \le t_1 < t_2 < \dots < t_n$; denote by $y_s$ the state of the chain at time $t_s$, i.e.~$y_s := X(t_s)$.
	Then, the expected jumps and holding times across observations are,
	\begin{align*}
	\bE_{\mathbf{Q}}[\mathbf{K}_{ij}(t)|\mathbf{y}]
	=
	&
	\sum_{s=1}^{n-1} \frac{\left(\exp(\mathbf{C}^{(i j)}_{\gamma} (t_{s+1} -t_{s}))\right)_{y_{s},h+y_{s+1}}}{\left(\exp(\mathbf{Q} (t_{s+1} -t_{s}))\right)_{y_{s},y_{s+1}}} \, ,
	\\
	\bE_{\mathbf{Q}}[\mathbf{S}_{i}(t)|\mathbf{y}]=
	&
	\sum_{s=1}^{n-1} \frac{\left(\exp(\mathbf{C}^{(i)}_{\phi} (t_{s+1} -t_{s}))\right)_{y_{s},h+y_{s+1}}}{\left(\exp(\mathbf{Q} (t_{s+1} -t_{s}))\right)_{y_{s},y_{s+1}}}.
	\end{align*}
\end{proposition}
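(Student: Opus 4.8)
The plan is to reduce the two global conditional expectations to single-interval computations, evaluate those by Bayes' rule, compute the resulting numerators probabilistically, and then recognise the numerators as blocks of a $2h\times 2h$ matrix exponential using the integral identity already recorded in \eqref{Eq:Matrix Exponential Derivative Relation}. First I would exploit that both $\mathbf{K}_{ij}$ and $\mathbf{S}_i$ are additive in time, so that $\mathbf{K}_{ij}(t_n)=\sum_{s=1}^{n-1}\big(\mathbf{K}_{ij}(t_{s+1})-\mathbf{K}_{ij}(t_s)\big)$ and likewise for $\mathbf{S}_i$; conditioning on the observed path $\mathbf{y}$ and using the Markov property, the increment over $[t_s,t_{s+1}]$ depends only on the pair $(y_s,y_{s+1})$. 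Hence everything reduces to computing, for a generic interval of length $\Delta:=t_{s+1}-t_s$ and endpoints $a:=y_s$, $b:=y_{s+1}$, the quantities $\bE_{\mathbf{Q}}[\mathbf{K}_{ij}(\Delta)\mid X(0)=a,\,X(\Delta)=b]$ and $\bE_{\mathbf{Q}}[\mathbf{S}_i(\Delta)\mid X(0)=a,\,X(\Delta)=b]$, and then summing over $s$.

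Next I would apply Bayes' rule: writing $p_{ab}(\Delta)=(e^{\mathbf{Q}\Delta})_{ab}$, each such conditional expectation equals $\bE_{\mathbf{Q}}[\,\cdot\,\1_{\{X(\Delta)=b\}}\mid X(0)=a]/p_{ab}(\Delta)$, whose denominator already matches the one in the claimed formulas, so only the numerator must be identified. For the holding time this is immediate: from $\mathbf{S}_i(\Delta)=\int_0^\Delta\1_{\{X(u)=i\}}\dd u$, Fubini together with the Markov property at time $u$ gives $\bE_{\mathbf{Q}}[\mathbf{S}_i(\Delta)\1_{\{X(\Delta)=b\}}\mid X(0)=a]=\int_0^\Delta p_{ai}(u)\,p_{ib}(\Delta-u)\,\dd u$. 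For the jump count I would write $\mathbf{K}_{ij}$ as the integral of the counting random measure $N^{ij}$ of $i\to j$ transitions, which (since $X$ has generator $\mathbf{Q}$) admits the compensator $q_{ij}\int_0^{\cdot}\1_{\{X(u^-)=i\}}\dd u$; conditioning at the successive jump times via the strong Markov property replaces the terminal factor $\1_{\{X(\Delta)=b\}}$ by the deterministic, hence predictable, integrand $u\mapsto p_{jb}(\Delta-u)$, and the compensator formula then yields $\bE_{\mathbf{Q}}[\mathbf{K}_{ij}(\Delta)\1_{\{X(\Delta)=b\}}\mid X(0)=a]=q_{ij}\int_0^\Delta p_{ai}(u)\,p_{jb}(\Delta-u)\,\dd u$.

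Finally I would invoke the block-triangular integral identity behind \eqref{Eq:Matrix Exponential Derivative Relation} (equivalently \cite{VanLoan1978}): for any $h\times h$ matrix $\mathbf{B}$, the upper-right $h\times h$ block of $\exp\!\left(\left[\begin{smallmatrix}\mathbf{Q}&\mathbf{B}\\0&\mathbf{Q}\end{smallmatrix}\right]\Delta\right)$ equals $\int_0^\Delta e^{\mathbf{Q}u}\mathbf{B}\,e^{\mathbf{Q}(\Delta-u)}\dd u$. Taking $\mathbf{B}=q_{ij}\mathbf{e}_i\mathbf{e}_j^{\intercal}$, this block is exactly the upper-right block of $\exp(\mathbf{C}^{(ij)}_\gamma\Delta)$, and its $(a,b)$ entry is precisely the numerator computed above for $\mathbf{K}_{ij}$; taking $\mathbf{B}=\mathbf{e}_i\mathbf{e}_i^{\intercal}$ gives the upper-right block of $\exp(\mathbf{C}^{(i)}_\phi\Delta)$ and the numerator for $\mathbf{S}_i$. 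Dividing by $p_{ab}(\Delta)$ and summing over $s=1,\dots,n-1$ with $a=y_s$, $b=y_{s+1}$, $\Delta=t_{s+1}-t_s$ then produces both stated identities.

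The step I expect to be the main obstacle is the probabilistic evaluation of the jump-count numerator, specifically the rigorous passage of the non-predictable factor $\1_{\{X(\Delta)=b\}}$ through the stochastic integral against $N^{ij}$. The clean way is to enumerate the jump times $\sigma_1<\sigma_2<\cdots$, condition each summand on $\mathcal{F}_{\sigma_k}$ so that on $\{\sigma_k\le\Delta,\,X(\sigma_k^-)=i,\,X(\sigma_k)=j\}$ the factor $\1_{\{X(\Delta)=b\}}$ becomes $p_{jb}(\Delta-\sigma_k)$ by the strong Markov property, re-sum to an integral of a deterministic integrand against $N^{ij}$, and only then apply the compensator (Watanabe) formula. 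Everything else — the Fubini step for $\mathbf{S}_i$, the Bayes step, and the matrix-exponential identification — is routine given \eqref{Eq:Matrix Exponential Derivative Relation}.
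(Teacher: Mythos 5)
Your argument is correct and is essentially the standard derivation behind this result: the paper itself states the proposition without proof, citing \cite{ReisSmith2017}, \cite{VanLoan1978} and \cite{Inamura2006}, and your route --- conditional independence of the observation intervals given the skeleton, the Bladt--S{\o}rensen-type identities $\bE[\mathbf{S}_i(\Delta)\1_{\{X(\Delta)=b\}}\mid X(0)=a]=\int_0^\Delta p_{ai}(u)p_{ib}(\Delta-u)\,\dd u$ and $\bE[\mathbf{K}_{ij}(\Delta)\1_{\{X(\Delta)=b\}}\mid X(0)=a]=q_{ij}\int_0^\Delta p_{ai}(u)p_{jb}(\Delta-u)\,\dd u$, followed by Van Loan's block-triangular exponential identity to recognise these convolutions as the upper-right blocks of $\exp(\mathbf{C}^{(ij)}_{\gamma}\Delta)$ and $\exp(\mathbf{C}^{(i)}_{\phi}\Delta)$ --- is exactly how those references establish it. I see no gaps; your handling of the non-predictable terminal indicator via the strong Markov property at the jump times before applying the compensator formula is the right way to make the jump-count step rigorous.
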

When one only has access to an observed sequence of TPMs $\mathbf{P}$ with equal observation length we obtain,
\begin{align*}
\bE_{\mathbf{Q}}[\mathbf{K}_{ij}(T)|\mathbf{P}]
=
& \sum_{u=1}^{M} \sum_{s=1}^{h}\sum_{r =1}^{h} \mathbf{P}^{u}_{sr}(t)\frac{\left(\exp(\mathbf{C}^{(i j)}_{\gamma} t)\right)_{s,h+r}}{\left(\exp(\mathbf{Q} t)\right)_{s,r}} \, , \notag
\\
\bE_{\mathbf{Q}}[\mathbf{S}_{i}(T)|\mathbf{P}]
=
&
\sum_{u=1}^{M} \sum_{s=1}^{h}\sum_{r=1}^{h} \mathbf{P}^{u}_{sr}(t) \frac{\left(\exp(\mathbf{C}^{(i)}_{\phi}t)\right)_{s,h+r}}{\left(\exp( \mathbf{Q} t)\right)_{s,r}} \, ,
\end{align*}
where $M=T/t$ (the number of observations) and $\mathbf{P}^{u}$ is the TPM of the $u$-th observation.

Roughly speaking, the above formula is taking each row in the TPM to contain equal amounts of information (observations). When one knows the number of transitions between the states $\mathbf{N}$, then $\mathbf{P}_{sr}^{u}(t)$ is replaced by $\mathbf{N}_{sr}(u)$, where $\mathbf{N}_{sr}(u)$ is the number of observed transitions in observation $u$.

The $M$-step is just the ratio of these two quantities and thus the results yield closed-form expressions for the EM algorithm's steps making the algorithms much faster (see results in \cite{ReisSmith2017}).

\section{Proof of Theorem \ref{Thm:CTMC Delta Method}}
\label{Sec:Proof of Delta Method}

The proof relies on the multivariate delta method, see \cite[Theorem 8.16]{LehmannCasella1998}.
\begin{proposition}[Delta Method]
	\label{Prop:Multivariate Delta}
	Let $(X_{1 \nu}, \dots, X_{s \nu})$, $\nu=1, \dots, n$, be $n$ independent $s$-tuples of random variables with $\bE [ X_{i \nu} ] = \xi_{i}$ and $\text{cov}(X_{i \nu},X_{j \nu})= \sigma_{ij}$. Let $\bar{X}_{i}$ denote the empirical mean, $\bar{X}_{i} := \sum_{\nu} X_{i \nu}/n$, and suppose that $h$ is a real-valued function of $s$ arguments with continuous first partial derivatives. Then,
	\begin{align*}
	\sqrt{n}\Big( h(\bar{X}_{1}, \dots, \bar{X}_{s})-h(\xi_{1}, \dots, \xi_{s})\Big) \xrightarrow{\text{Dist}}
	\cN(0, v^2)
	,
	\quad
	v^{2}= \sum_{i} \sum_{j} \sigma_{ij} \frac{\partial h}{\partial \xi_{i}} \frac{\partial h}{ \partial \xi_{j}} \, , \quad \text{provided } v^{2}>0.
	\end{align*}
\end{proposition}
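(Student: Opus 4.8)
The plan is to reduce the statement to the multivariate central limit theorem (CLT) coupled with a first-order Taylor expansion of $h$, and then to read off the variance of the resulting Gaussian as a quadratic form in the gradient. Throughout I write $\bar{\mathbf{X}}=(\bar{X}_{1},\dots,\bar{X}_{s})^{\intercal}$, $\bm{\xi}=(\xi_{1},\dots,\xi_{s})^{\intercal}$, and let $\Sigma=(\sigma_{ij})_{i,j=1}^{s}$ be the common covariance matrix of each $s$-tuple.

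First I would apply the multivariate CLT to the independent tuples, which by hypothesis share the mean vector $\bm{\xi}$ and covariance $\Sigma$ and hence possess finite second moments. This gives the weak convergence
\begin{align*}
\sqrt{n}\,\big(\bar{\mathbf{X}}-\bm{\xi}\big) \xrightarrow{\text{Dist}} \cN(\mathbf{0},\Sigma).
\end{align*}
As a by-product, the weak law of large numbers yields $\bar{\mathbf{X}}\xrightarrow{\text{P}}\bm{\xi}$, which I will use to tame the Taylor remainder below.

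Next I would expand $h$ about $\bm{\xi}$. Since $h$ has continuous first partials, applying the single-variable mean value theorem to $t\mapsto h\big(\bm{\xi}+t(\bar{\mathbf{X}}-\bm{\xi})\big)$ on $[0,1]$ produces a random point $\tilde{\bm{\xi}}_{n}$ on the segment joining $\bar{\mathbf{X}}$ and $\bm{\xi}$ such that
\begin{align*}
h(\bar{\mathbf{X}})-h(\bm{\xi}) = \nabla h(\tilde{\bm{\xi}}_{n})^{\intercal}\,\big(\bar{\mathbf{X}}-\bm{\xi}\big).
\end{align*}
Multiplying through by $\sqrt{n}$ then expresses the scaled difference as the product $\nabla h(\tilde{\bm{\xi}}_{n})^{\intercal}\,\sqrt{n}\,(\bar{\mathbf{X}}-\bm{\xi})$ of a random coefficient vector and the asymptotically normal term supplied by the CLT.

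The key step, and the one I expect to be the main obstacle, is to pass to the limit in this product. Because $\tilde{\bm{\xi}}_{n}$ lies between $\bar{\mathbf{X}}$ and $\bm{\xi}$ and $\bar{\mathbf{X}}\xrightarrow{\text{P}}\bm{\xi}$, we obtain $\tilde{\bm{\xi}}_{n}\xrightarrow{\text{P}}\bm{\xi}$; continuity of the first partials then upgrades this to $\nabla h(\tilde{\bm{\xi}}_{n})\xrightarrow{\text{P}}\nabla h(\bm{\xi})$. Combining the convergence in probability of the coefficient with the weak convergence of $\sqrt{n}(\bar{\mathbf{X}}-\bm{\xi})$ through Slutsky's theorem delivers
\begin{align*}
\sqrt{n}\,\big(h(\bar{\mathbf{X}})-h(\bm{\xi})\big) \xrightarrow{\text{Dist}} \nabla h(\bm{\xi})^{\intercal}\,\mathbf{Z},\qquad \mathbf{Z}\sim\cN(\mathbf{0},\Sigma).
\end{align*}
Finally, a linear functional of a centred Gaussian vector is itself centred Gaussian, so the limit is $\cN(0,v^{2})$ with $v^{2}=\nabla h(\bm{\xi})^{\intercal}\Sigma\,\nabla h(\bm{\xi})=\sum_{i}\sum_{j}\sigma_{ij}\,\tfrac{\partial h}{\partial \xi_{i}}\tfrac{\partial h}{\partial \xi_{j}}$, and the hypothesis $v^{2}>0$ ensures this limit is a genuine non-degenerate Gaussian rather than a point mass at $0$.
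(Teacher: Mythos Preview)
Your argument is the standard, correct proof of the multivariate delta method: multivariate CLT for the sample mean, a first-order mean-value expansion of $h$ along the segment $[\bm{\xi},\bar{\mathbf{X}}]$, continuity of the partials to upgrade $\nabla h(\tilde{\bm{\xi}}_n)\xrightarrow{\text{P}}\nabla h(\bm{\xi})$, and Slutsky to combine. The paper does not actually supply its own proof of this proposition; it simply quotes the result from \cite[Theorem~8.16]{LehmannCasella1998} and uses it as a tool in the proof of Theorem~\ref{Thm:CTMC Delta Method}, so there is nothing to compare against beyond noting that your derivation matches the textbook argument.
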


We now have the necessary material to prove our result.
\begin{proof}[Proof of Theorem \ref{Thm:CTMC Delta Method}]
	The assumption of asymptotic normality implies the expectation and covariance assumption of Proposition \ref{Prop:Multivariate Delta}. Moreover, it follows from standard results in likelihood based inference that ${\bm \sigma} \approx -  \mathbf{H}(\hat{\mathbf{Q}})^{-1}$ (see \cite[Chapter 5.4]{Knight2000}).

	For the partial derivatives of the probability matrix, it follows immediately by arguments in Section \ref{Sec:Direct Differentiation}.
	Also note that this representation implies that the first partial derivatives of $p_{ij}$ exist and are continuous.
	
To complete the proof, we need only to show that the RHS of \eqref{Eq:Variance estimate} is strictly positive.  Firstly, at a maximum $\mathbf{H}$ is negative definite (hence $\mathbf{H}^{-1}$ is also negative definite), therefore it is enough to have that $\partial p_{ij}/ \partial \mathbf{V}_{\hat{\mathbf{Q}}} \neq 0$ around the MLE. Observing the latter is one of the theorem's assumptions concludes the proof.
\end{proof}

\section{Overview of Point Processes}
\label{Sec:Overview of Point Processes}

Let us discuss how we look to embed history dependence into the model. We are interested in Hawkes processes (a specific type of \emph{self-exciting} point process) which have intensities of the form
\begin{align*}
\lambda_{t} =
\mu
+
\int_{0}^{t} \phi(t-s) \dd N_{s} \, .
\end{align*}
Hawkes processes are used to model many different phenomena, from earthquake occurrence to high frequency trading, see \cite{Ogata1988} and \cite{BacryMastromatteoMuzy2015}.  
Setting $\phi =0$ yields a constant intensity and this is equivalent to the Markov setting. However, $\phi$ allows us to vary the intensity with past events which is key for \emph{momentum} since past downgrades influence future transitions. As described in the introduction, a Hawkes process is just a counting process (generalising a Poisson process), hence it would imply that a rating transition had occurred and not which rating we have moved into. The latter being key, we consider processes which take values on some state space: such processes are known as \emph{marked point process (MPPs)}, see \cite[Section 6.4]{DaleyVereJones2003}. MPPs are point processes on a product space $\cT \times \cK$, that is, we return a set of values, $\{t_{k}, \kappa_{k}\}$ for $k=1,2, \dots$, where one thinks of $t_{k}$ as the event time of the point process (with intensity $\lambda$) and $\kappa_{k}$ of the ``mark'' associated to the event. These notions are what we shall use, but in general $t_{k}$ can be multidimensional e.g.~to include spatial dependence. In our case we have $\kappa_{k} \in \{1, \dots, h\}$ namely, it denotes the ratings which ensure our marked point process to be well defined.

The likelihood of a single realisation of a MPP, is given in \cite[p.251]{DaleyVereJones2003},
\begin{equation*}
L= \prod_{i=1}^{N_{g}(T)} \lambda_{g}^{*}(t_{i})f^{*}(\kappa_{i}|t_{i})e^{-\int_{0}^{T}\lambda_{g}^{*}(u) \dd u} \, ,
\end{equation*}
where we have the following notation, $N_{g}$ is the set of events occur, $\lambda_{g}$ is the intensity and $f$ is the so-called \emph{mark's distribution}. The $^{*}$ symbolises that the intensity and mark distribution depend on previous events. Namely, the intensity at time $t_{i}$, $\lambda_{g}^{*}(t_{i})$ depends on the previous events, $\{(t_{1}, \kappa_{1}), \dots, (t_{i-1}, \kappa_{i-1})\}$. Also note the distinction that $\lambda_{g}^{*}(t_{i})$ does not depend on the mark $\kappa_{i}$, but the mark $\kappa_{i}$ is allowed to depend on time $t_{i}$. The subscript $g$ is a common notation used to imply this is the ground process, which in our case is simply the timing of the upgrades/downgrades. By allowing the intensity and hence the number of jumps and the mark distribution to depend on previous events we can easily change the probability of upgrade/downgrade and thus embed rating momentum into the process.
Further details on likelihoods of MPP can be found in \cite[Section 7.3]{DaleyVereJones2003}.

One reason that we believe MPPs are a good choice for this particular problem is that one can view them as a natural generalisation of CTMCs. This is apparent from the likelihood since, letting $\lambda=q_{i}$ and $f=q_{ij}/q_{i}$ we recover the likelihood of a CTMC.

\end{document}